\newtheorem{theorem}{Theorem}
\def\BibTeX{{\rm B\kern-.05em{\sc i\kern-.025em b}\kern-.08em
		T\kern-.1667em\lower.7ex\hbox{E}\kern-.125emX}}
\begin{document}

\title{Deep Learning on Mobile Devices Through Neural Processing Units and Edge Computing}




\author{
\IEEEauthorblockN{Tianxiang Tan and Guohong Cao}
\IEEEauthorblockA{
	Department of Computer Science and Engineering \\
	The Pennsylvania State University\\
	Email: \{txt51, gxc27\}@psu.edu}
}

\maketitle

\begin{abstract}	
Deep Neural Network (DNN) is becoming adopted for video analytics on mobile devices. To reduce the delay of running DNNs, many mobile devices are equipped with Neural Processing Units (NPU). However, due to the resource limitations of NPU, these DNNs have to be compressed to increase the processing speed at the cost of accuracy. To address the low accuracy problem, we propose a Confidence Based Offloading (CBO) framework for deep learning video analytics. The major challenge is to determine when to return the NPU classification result based on the confidence level of running the DNN, and when to offload the video frames to the server for further processing to increase the accuracy. We first identify the problem of using existing confidence scores to make offloading decisions, and propose confidence score calibration techniques to improve the performance. Then, we formulate the CBO problem where the goal is to maximize accuracy under some time constraint, and propose an adaptive solution that determines which frames to offload at what resolution based on the confidence score and the network condition. Through real implementations and extensive evaluations, we demonstrate that the proposed solution can significantly outperform other approaches. 

\end{abstract}


\section{Introduction}

Deep Neural Networks (DNN) have been successfully applied to various computer vision and natural language processing problems.
Recently, many applications based on DNNs have been developed to provide more intelligent video analytics.
For example, some drones such as DJI Mavic Pro can recognize and follow a target based on video analytics;
law enforcement officers can use smart glasses to identify suspects \cite{bbc-news}.  
In these applications, only lightweight DNNs can be run locally and their accuracy is much lower than advanced DNNs.
Although advanced DNNs can provide us with better results, they also suffer from high computational overhead which means long delay and more energy consumption when running on mobile devices.

In recent years, many companies such as Huawei, Qualcomm, and Samsung have developed dedicated {\em Neural Processing Units (NPUs)} for mobile devices, which can process AI features. 
With NPU, the running time of these DNNs can be significantly reduced.
For example, on HUAWEI mate 10 pro, running AlexNet (a DNN) \cite{alex-nips12} on NPU is 30 times faster than running it on CPU.
Although GPUs on mobile devices can also be used to run DNNs, they are not as powerful as those running on desktops and servers. 
Compared to GPUs on mobile devices, NPUs are much faster and more energy efficient \cite{npu-gpu-compare}.
However, there are some fundamental limitations with NPU.
The most significant limitation of NPU is the precision of the floating-point numbers.
NPU uses 16 bits or 8 bits to represent the floating-point numbers instead of 32 bits in CPU. 
As a result, it runs DNNs much faster but less accurate compared to CPU.
Moreover, NPU has its own memory which is usually too small for advanced DNNs.
The advanced DNNs have to be compressed in order to be loaded and executed by NPU at the cost of accuracy.

To address the low accuracy problem on NPU, 
mobile devices can offload the data to the edge server and let the edge server run the DNNs.
Since the server has more computational power, high accuracy can be achieved by processing the data with
the most advanced DNNs in a short time. However, when the network condition is poor or when the data size is large which is usually true in video analytics, offloading may take longer time to transmit the data.
There is a tradeoff between the offloading based approach
and the NPU based approach. The offloading based approach
has good accuracy, but has longer delay under poor
network condition. On the other hand, the NPU based approach is faster, but with less accuracy. 
Recent research (e.g., FastVA \cite{tan-infocom2020}) tries to combine these two approaches
for video analytics on mobile devices. To maximize accuracy under some time constraint, 
FastVA determines which frames should be processed locally on NPU and
which frames should be offloaded to the edge server.
However, existing research treats DNN as a black box and never considers exploring insights about 
running DNNs on NPU to improve performance. 

We address this problem by considering the confidence level of running DNNs on NPU. 
The confidence level can be represented by the confidence score, which is computed based on the output of the DNN. 
If the confidence score is higher than a threshold, the classification result on NPU is most likely accurate and can be directly used; otherwise, the data is offloaded for further processing to improve the accuracy. 
Similar ideas have been studied in existing work although for different purpose. For example, in \cite{teerapittayanon-icdcs17, wang-infocom2020}, confidence score is leveraged to reduce the processing delay by early exit; i.e., returning the classification results without running all DNN layers. 
However, there are two unique challenges for applying this idea to video analytics on mobile devices with NPU. 
First, the confidence score of many advanced DNNs cannot accurately estimate the classification results, and then may not be effective for making offloading decisions. 
Second, existing research relies on fixed confidence score threshold to determine when to offload. However, in video analytics, the decision can be affected by many factors such as the confidence score, the network condition, the video frame resolution, and thus fixed threshold will not work well. 

In this paper, we address these challenges by proposing a Confidence Based Offloading (CBO) framework for deep learning video analytics on mobile devices with NPU. 
To address the first challenge, we propose techniques to calibrate the confidence score so that it can accurately reflect the correctness of the classification results on NPU.
To address the second challenge, we formulate the CBO problem, where the goal is to 
maximize accuracy under some time constraint. To achieve this goal, the confidence score threshold is adaptively adjusted based on the network condition, confidence score and the selected frame resolution, before being used for determining if the video frame should be offloaded for further processing. 

Our contributions can be summarized as follows.

\begin{itemize}
	
	\item
	To address the accuracy loss problem in deep learning video analytics on NPU based mobile devices, 
	we study confidence based offloading and propose confidence score calibration techniques to improve the performance. 
	
	\item We formulate the CBO problem, and propose an optimal solution. Since the optimal solution relies on the perfect knowledge of all future video frames, we further propose an online algorithm 
	to relax this assumption. 
	
	\item  
	Through real implementations and extensive evaluations, we demonstrate that the proposed solution can significantly outperform other approaches. 
	
\end{itemize}

The rest of the paper is organized as follows.
Section \ref{sec:preliminary} presents the background and motivation.
In Section \ref{sec:quantify-conf}, we evaluate the effectiveness of CBO and propose calibration techniques to improve performance.
In Section \ref{sec:max-acc}, we present our CBO framework for video analytics. 
Section \ref{sec:evaluation} presents the evaluation results and Section \ref{sec:related-work} reviews related work.
Section \ref{sec:conclusion} concludes the paper.

\section{Preliminary} \label{sec:preliminary}


\subsection{Characteristics of NPU}

\begin{figure}
	\centering
	\subfigure[Processing time comparison]{\includegraphics[width=0.48\linewidth]{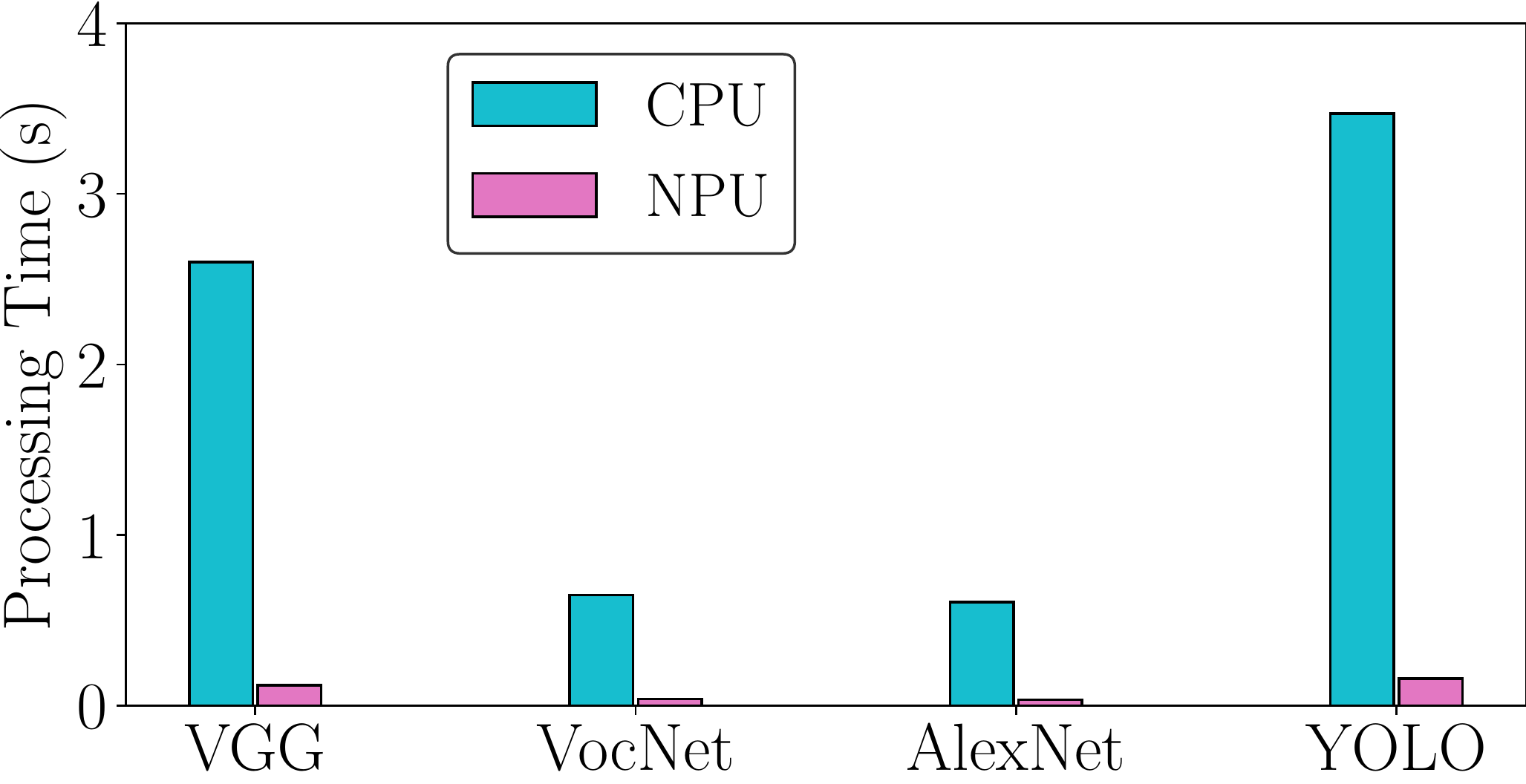}}
	\,
	\subfigure[Accuracy comparison]{\includegraphics[width=0.48\linewidth]{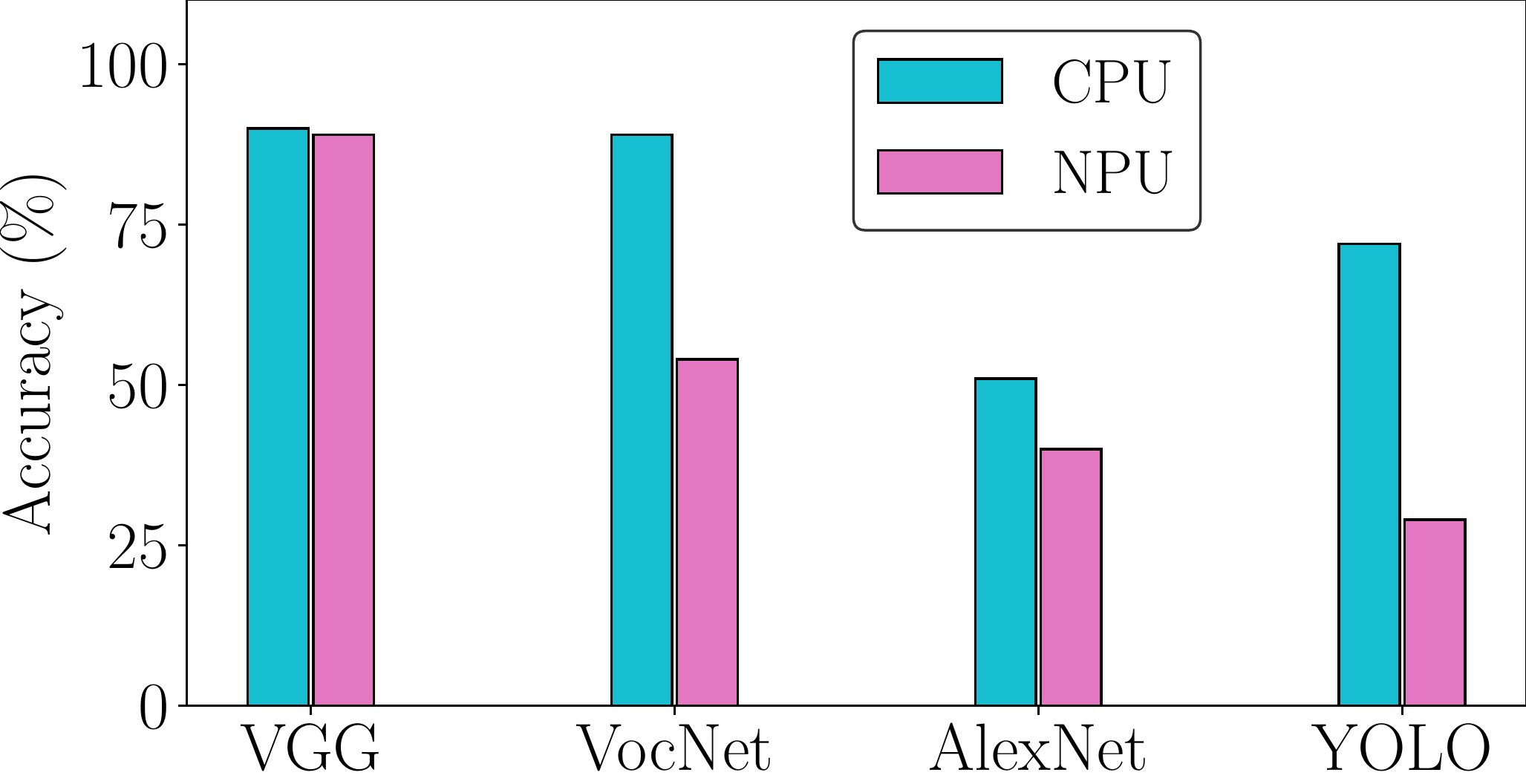}}
	\vspace{-0.5em}
	\caption{Performance comparisons of running DNNs on NPU and CPU.}
	\label{fig:NPU-profile}
	\vspace{-2em}
\end{figure}


To have a better understanding of NPU, we conducted some experiments on HUAWEI mate 10 pro, using the following four DNN models: 1) VGG \cite{parkhi-bmvc15} with the LFW dataset \cite{lfw-dataset},
2) VocNet \cite{lapuschkin-cvpr16} with 4000 images from the VOC dataset \cite{everingham-ijcv15},
3) AlexNet \cite{alex-nips12} with the VOC dataset,
4) YOLO Small \cite{redmon-cvpr16} with 4000 images randomly chosen from the MS COCO dataset \cite{lin-eccv14}.
As shown in Figure \ref{fig:NPU-profile}(a), compared to CPU, running VGG, VocNet, AlexNet, Yolo Small on NPU can significantly reduce the processing time by 95\%; however, this may be at the cost of accuracy loss as 
illustrated in Figure \ref{fig:NPU-profile}(b). 
Specifically, compared to CPU, using NPU has similar accuracy when running VGG, 30\% accuracy loss when running VocNet, 11\% accuracy loss for AlexNet, and the F1-score drops to 0.3 for YOLO Small.

The accuracy loss is mainly because NPU can only support FP16 operations and store the intermediate result of each layer using FP16. Running DNNs with FP16 can save memory and reduce the processing time, but it may reduce the accuracy due to the numerical instability caused by floating point overflow or underflow.

The amount of accuracy loss is related to the DNN model.
In VGG, 
the extracted feature vectors are compared, and they represent the same person if the similarity is above a predefined threshold.
Although NPU introduced error may change some values, the relationship between the similarity and the threshold will remain and thus keep the same level of accuracy.
However, AlexNet, VocNet and Yolo Small use more information in the feature vectors to identify or locate multiple objects in the images.
Each value in the feature vector represents the category, the location or the size of an object and a small error introduced by NPU can change the prediction completely.
As a result, they have lower accuracy when running on NPU.

As shown in Figure \ref{fig:NPU-profile}, NPU runs much faster than CPU. 
It is ideal for running some DNNs such as VGG, where using NPU can significantly reduce the processing time while maintaining high accuracy. 
However, it may not be the best choice for running DNNs such as VocNet and Yolo small, due to the high accuracy loss. 
In this paper, we address this problem by proposing techniques to improve its accuracy.

\subsection {Motivation}


To address the accuracy loss problem, one solution is to offload the video frames to the edge server or cloud for processing.
However, transmitting all video frames to the server may take much longer time and may violate the delay constraint, compared to processing them locally on NPU, especially when the wireless bandwidth is limited.  
Instead of processing video frames only on NPU or only through offloading, we take advantage of both approaches by exploring insights about running DNNs on NPU.  

Although running DNNs on NPU has lower accuracy in general, the accuracy of running DNNs for some specific objects can be much higher, due to the skewed accuracy of the DNNs.
This is because DNNs are becoming more and more complex, with many parameters and layers designed to achieve high accuracy for general inference scenarios. These DNNs have lots of redundancy especially when the inference scenario is simple; i.e.,  
even when running on NPU with lower precision parameter and simple model structure, the compressed DNN can still perform well for some simple scenarios. 

\begin{figure}[t]
	\centering
	\includegraphics[width=0.65\linewidth]{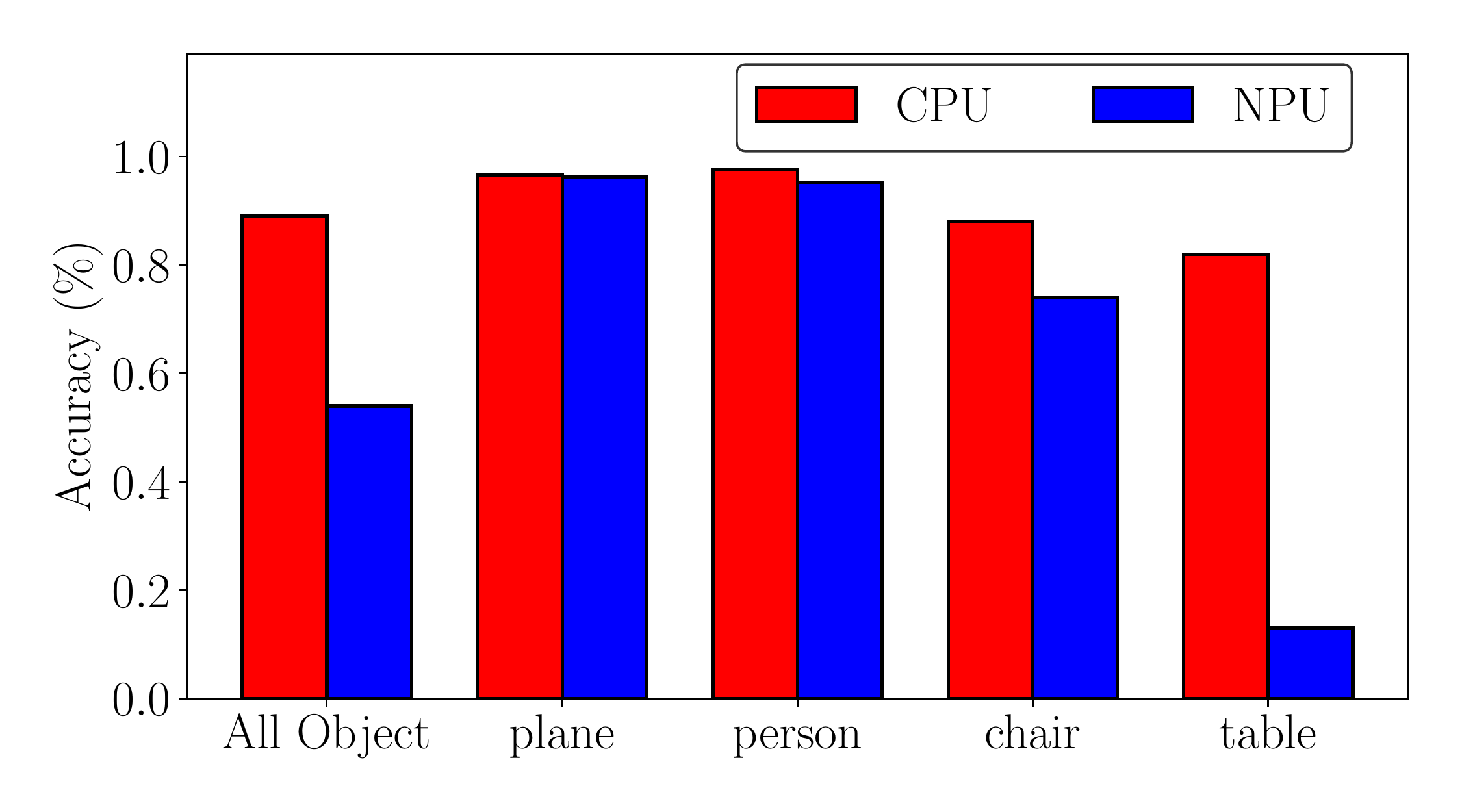}
	\vspace{-1em}
	\caption{Accuracy for different objects with VocNet}
	\label{fig:object-acc}
	\vspace{-2.3em}
\end{figure}

Figure \ref{fig:object-acc} shows the accuracy of running VocNet on CPU and NPU for different objects.
When considering all objects, the accuracy of running VocNet on NPU is 0.54, which is lower than that on CPU (0.89).
When considering specific objects, VocNet on NPU performs much better for recognizing airplanes with accuracy of 0.96, which is similar to the accuracy on CPU.
This is because there is much higher visual difference between the background (i.e., blue sky) and the airplane.  
Even though the VocNet has been compressed to be run on NPU, with lower precision parameters and simpler model structure, it can still recognize airplanes accurately.
On the other hand, VocNet performs poorly on NPU for recognizing tables, with accuracy of 0.1. This is because  
most tables in the data set have much complex background, and the visual difference between the background and table is much smaller. Then, a more complex and advanced DNN (i.e., the original DNN) is needed to classify these objects correctly.

To consider both cases, two different DNNs should be applied, a compressed DNN to process simple cases and the original DNN to process more complex cases. Due to the resource limitation of NPU, the compressed DNN should be run on NPU and the original DNN should be run on the edge server or cloud. 
Then, images such as airplane should be processed with local NPU and other images such as tables should be offloaded to the server for further processing. 
However, the system does not know whether an image is an airplane or table before hand. 
Thus, we propose the following solution. First, the image is processed locally on NPU, which has negligible delay. 
If the image can be recognized with a high level of confidence, the classification result is returned; otherwise, the image is offloaded for further processing. 
Then, most images can be processed locally with high accuracy and low delay, and only images with low accuracy are offloaded. 
The key component of this solution is how to quantify the confidence level of running DNNs on NPU, which will be discussed in the next section.

\begin{figure*}[t]
	\centering
	\begin{minipage}[t]{0.235\textwidth}
		\centering
		\includegraphics[width=\linewidth]{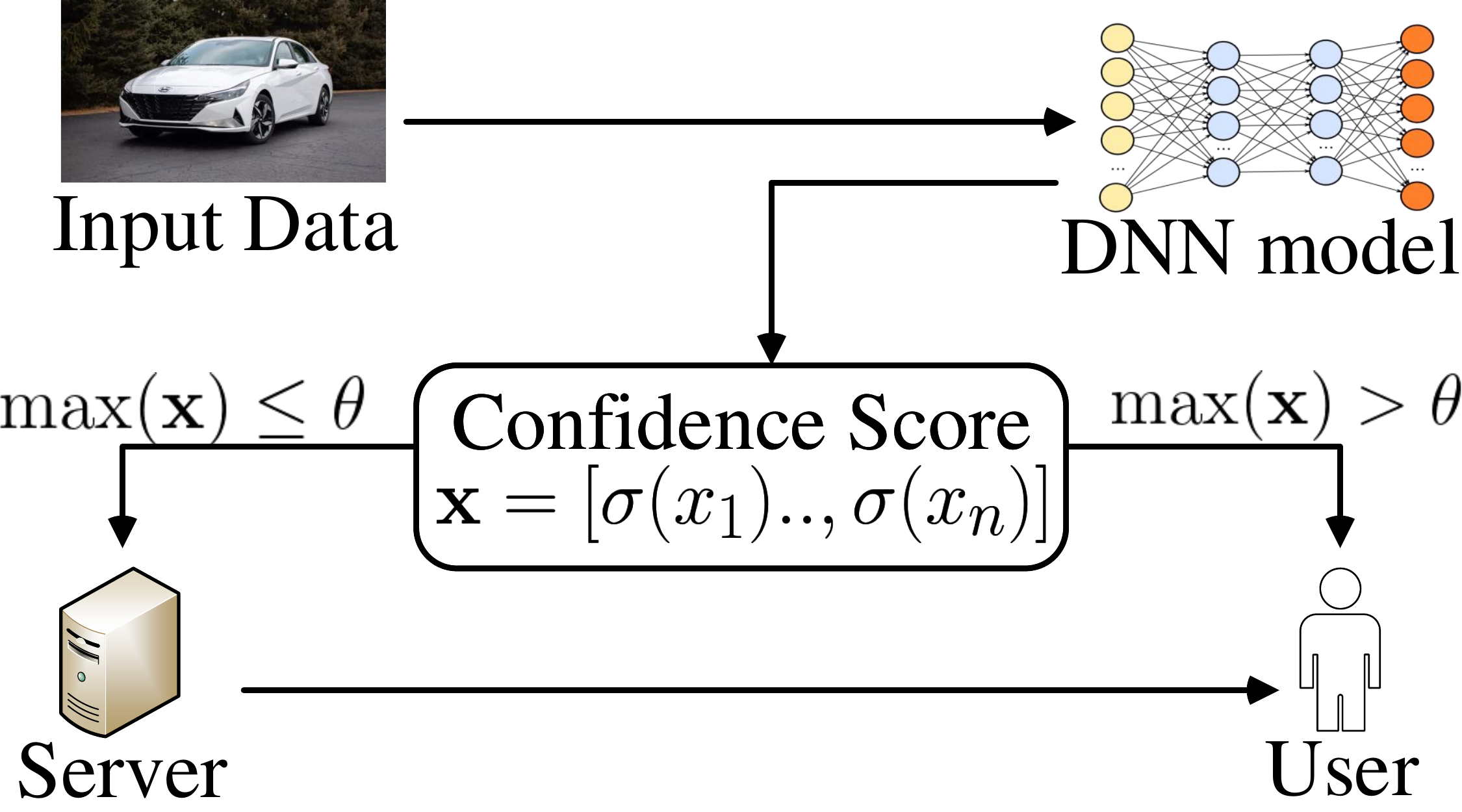}
		\caption{Making offloading decisions based on confidence score.}\label{fig:offload-conf-score}
	\end{minipage}
	\,
	\begin{minipage}[t]{0.235\textwidth}
		\centering
		\includegraphics[width=\linewidth]{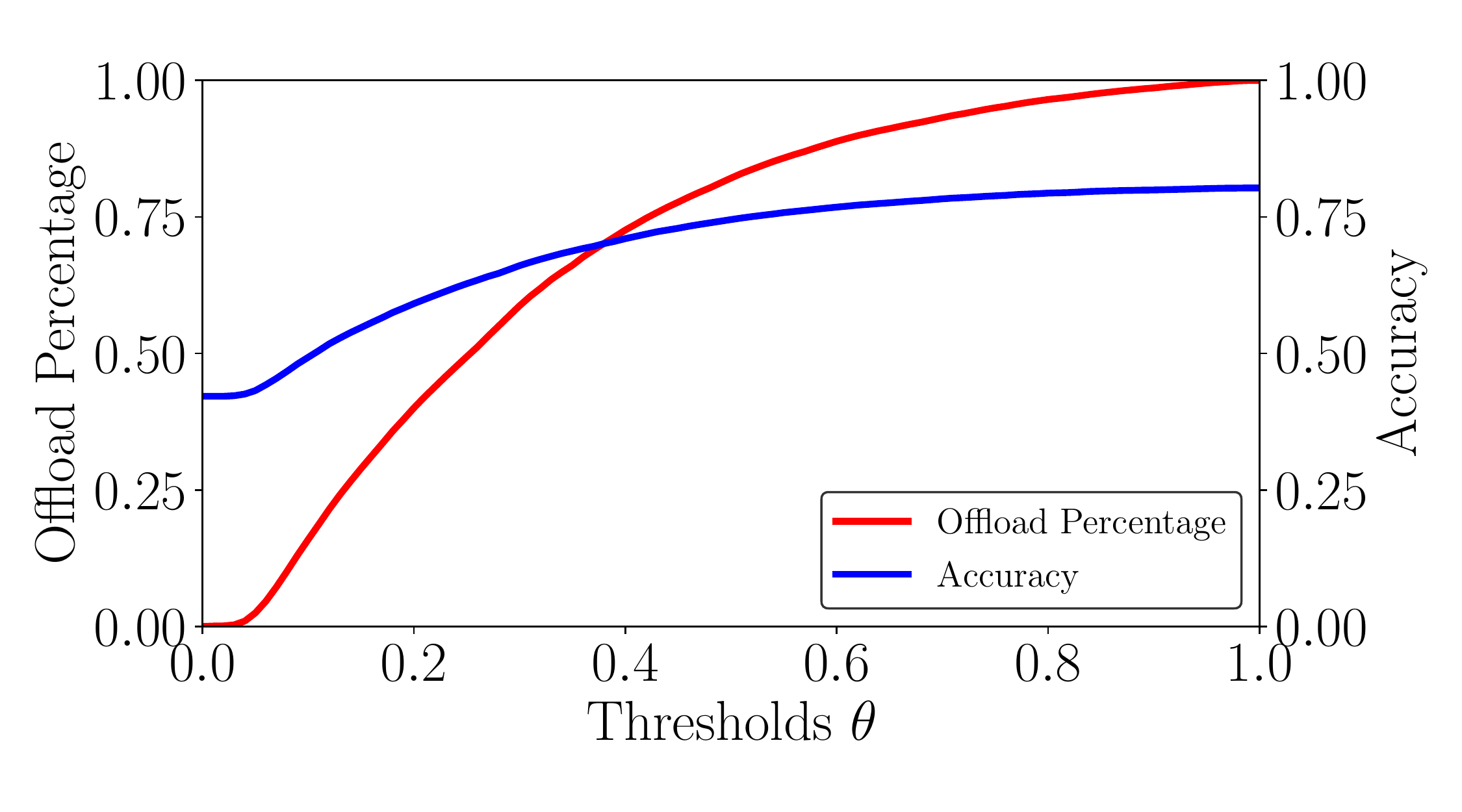}
		\caption{The effects of confidence score threshold $\theta$.}\label{fig:org-conf-score}
	\end{minipage}
	\,
	\begin{minipage}[t]{0.235\textwidth}
		\centering
		\includegraphics[width=\linewidth]{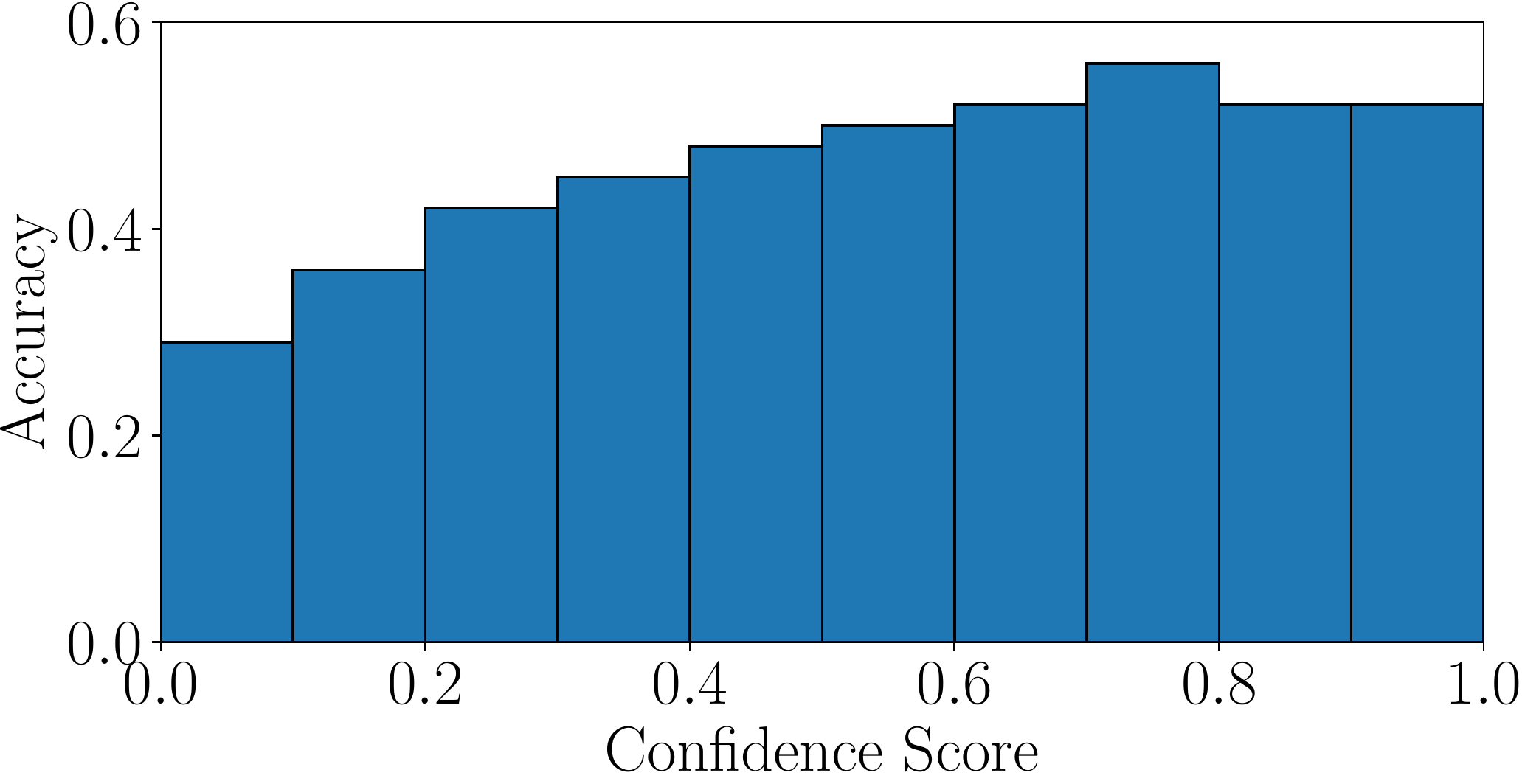}
		\caption{Accuracy vs. confidence score.}\label{fig:conf-acc-uncal}
	\end{minipage}
	\,
	\begin{minipage}[t]{0.235\textwidth}
		\centering
		\includegraphics[width=\linewidth]{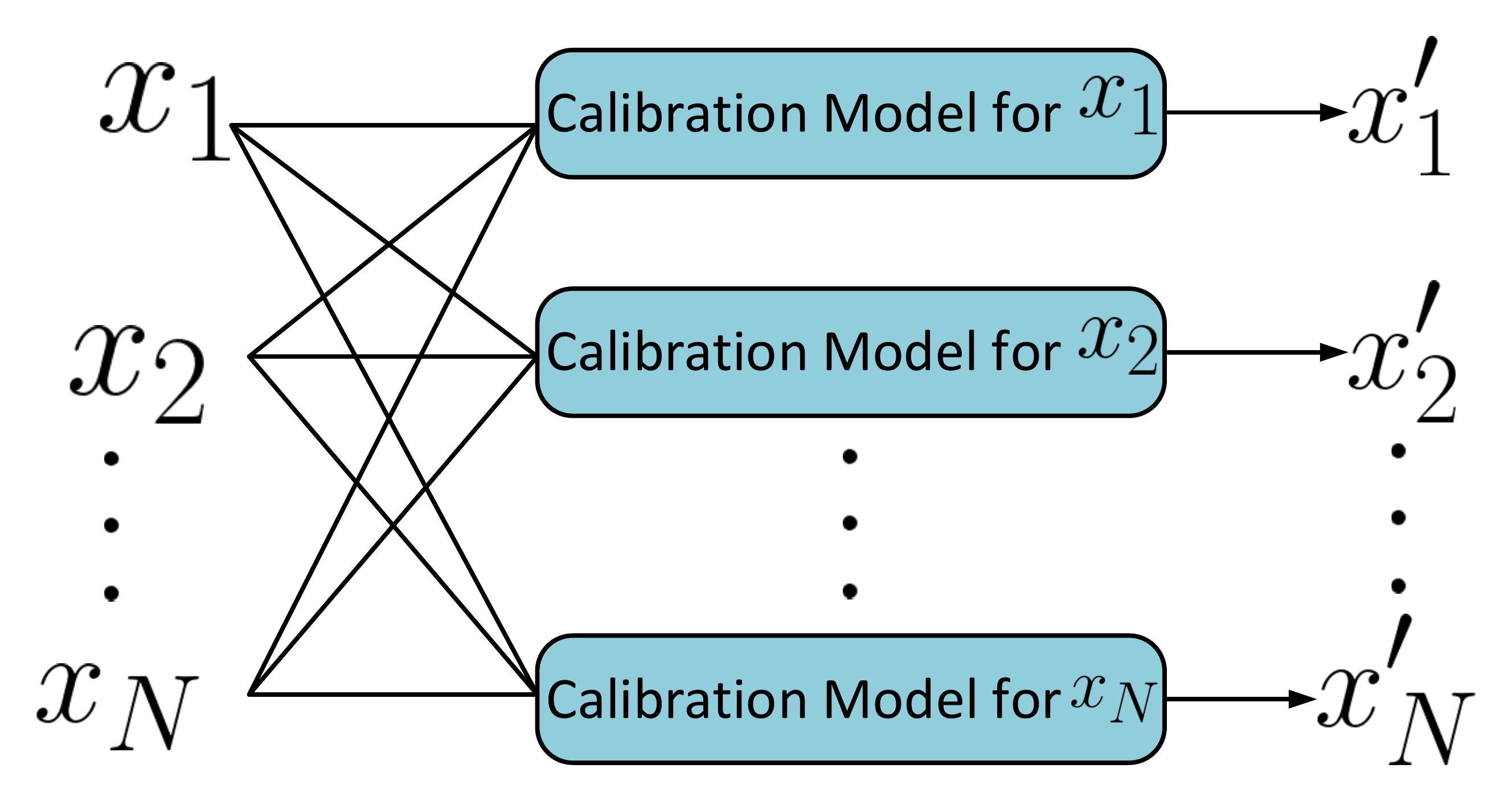}
		\caption{Calibration model training.}\label{fig:cal-model-train}
	\end{minipage}
	\vspace{-1em}
\end{figure*}

\section{Quantifying the confidence level} \label{sec:quantify-conf}

The result of the DNN is represented by a feature vector which can be extracted from the last layer of the DNN.
To quantify the confidence level of the DNN, the feature vector is used to calculate the confidence score.
In this section, we first present an offloading framework based on the calculated confidence score, and 
and then present confidence score calibration techniques to improve its accuracy.

\subsection{Confidence Based Offloading (CBO)}


Figure \ref{fig:offload-conf-score} illustrates the basic idea of CBO. 
To determine which images should be offloaded, confidence score is used to predict the correctness of the result.
The confidence score can be computed based on the feature vector produced by the DNN.
Let $\mathbf{x} = [x_1, x_2, x_3, \ldots, x_N]$ denote the extracted feature vector, where 
each element $x_i$ represents the probability that object $i$ appears in the image.
Normally, the values of these elements are not normalized and the confidence score is computed as $\max \sigma(x_i)$,
where $\sigma(x_i)$ is the softmax function and it is defined as $\sigma(x_i)=\frac{e^{x_i}}{\sum_{k=1}^N e^{x_k}}$.
As shown in Figure \ref{fig:offload-conf-score}, when the confidence score is larger than a threshold $\theta$, 
the result is returned; otherwise, the image is offloaded for further processing. 

We conducted experiments to evaluate the effectiveness of CBO.
The AlexNet and ResNet-152 which are trained based on the ImageNet dataset, are used to process videos randomly selected from the FCVID dataset, which includes thousands of real-world videos.
In the experiment, AlexNet is run on NPU and ResNet-152 is run on the server.
The offload percentage is defined as $\frac{n^o}{n}$, where $n^o$ is the number of offloaded frames and $n$ is the total number of video frames.

The results are shown in Figure \ref{fig:org-conf-score}.
When $\theta$ is 0, all frames are only processed by NPU.
When $\theta$ is 1, all frames are offloaded to the server. 
As shown in the figure, as $\theta$ increases from 0 to 1, more frames are offloaded to the server and the accuracy increases from 0.42 to 0.81. However, this is not as expected due to the following reason. 
When the accuracy requirement is 0.8, the threshold has to be above 0.8 at which the offloaded traffic is above 90\%. Then, CBO does not save any bandwidth. 

The reason of such poor performance is because the confidence score generated by the DNN on NPU 
cannot accurately estimate the correctness of the result. We conducted the following experiment.
With the same DNN and the dataset to identify the reason. In this experiment, 
the video frames are divided into 10 bins with 0.1 confidence interval. 
For example, the first bin includes the frames with confidence score from 0 to 0.1.
For each bin, we process the frames on NPU and calculate the accuracy.
The result is shown in Figure \ref{fig:conf-acc-uncal}.
As the confidence score increases from 0 to 1, the accuracy increases from 0.29 to 0.5.
For example, the accuracy is 0.42 for the frames with confidence score between 0.2 to 0.3.

CBO is based on the assumption that frames classified incorrectly on NPU will be offloaded for further processing. This can be easily achieved if the confidence score of running the DNN to process the frame is the same as its accuracy. Then, the classification result of processing frames with confidence score 0.9 is more likely to be correct 
than that with confidence score 0.5, and hence the accuracy can be improved by offloading frames with lower confidence score. 
However, as shown in Figure \ref{fig:conf-acc-uncal}, the accuracy remains to be 0.5 for frames with confidence score much higher than 0.5 (e.g., 0.9). Then, it is hard to use the confidence score to determine which frame should be offloaded. This explains why CBO performs poorly as shown in Figure \ref{fig:org-conf-score}.
Fortunately, there is nothing wrong with CBO. The problem is due to the fact that the confidence score cannot accurately estimate the correctness of the classification result, and it should be calibrated. 

\subsection{Confidence Score Calibration}

Although modern DNNs can achieve better accuracy in classification problems, the confidence scores produced by these DNNs are poorly calibrated \cite{guo-icml17}.
To address this issue, researchers have proposed many confidence score calibration techniques, following two different methods. 

\textit{Platt Calibration:}
Platt Calibration is a parametric method \cite{niculescu-ICML05} and its key idea is to train logistic regression models which can transform the confidence score into a calibrated one to provide better correctness estimation.
The logistic model can be defined as $P(y_i=1|\mathbf{x}) = \frac{1}{1 + e^{\mathbf{A} f(\mathbf{x}) + \mathbf{B}}}$, where $\mathbf{A, B}$ are the parameters that needs to be trained.

\textit{Isotonic Regression:}
Different from Platt Calibration, the Isotonic Regression is a common non-parametric calibration method \cite{zadrozny-sigkdd02}.
It learns a piecewise constant function $f$ to transform the confidence score into a calibrated one, which means $x'_i = f(x_i)$.
To train the Isotonic Regression model, $f$ is trained by minimizing the square loss function $\sum_{i=1}^{n} (f(x_i) - y_i)^2$.





Although these two methods use different machine learning models to calibrate the confidence score, they have the same training procedure for confidence score calibration, and this procedure is shown in Figure \ref{fig:cal-model-train}.
As shown in the figure, $N$ models are trained for calibrating the confidence score $x_1, x_2, \ldots, x_N$.
For each model $i$, it takes the feature vector $\mathbf{x} = [x_1, x_2, x_3, \ldots, x_N]$ as input and outputs a new confidence score $x'_i$.
When the $i^{th}$ model is being trained, the output variable $y_j$ is set to be 1 if object $i$ appears in the frame $I_j$.
$y_j$ is set to be 0 if $I_j$ does not include object $i$.

In our CBO, our goal is to choose a method which can calibrate the 
confidence score and make it match the accuracy of the DNN.
To achieve this goal, we use the following metrics to evaluate them and select the best. 

\begin{itemize}
	\item Expected Calibration Error (ECE): ECE measures the expectation difference between the confidence score and accuracy of the DNN. ECE is defined as
	$\sum_{i=1}^{10} \frac{B_i}{n} |A(B_i) - C(B_i)|$,	where $B_i$ ($1 \leq i \leq 10$) is the bins with 0.1 confidence interval. For example $B_1$ includes the frames with confidence score between 0 to 0.1. 
	For each bin $B_i$, we process the frames on NPU and calculate the accuracy $A(B_i)$ and the average confidence score $C(B_i)$.
	
	\item Maximum Calibration Error (MCE): MCE measures the worst-case difference between the confidence score and accuracy of the DNN. 
	MCE is defined as $\max_{1 \leq i \leq 10} |A(B_i) - C(B_i)|$.
\end{itemize}

\begin{table}[h]
	\centering
	\begin{tabular}{|l|l|l|}
		\hline
		Method               & ECE  & MCE  \\ \hline
		Uncalibrated         & 0.27 & 0.48 \\ \hline
		Platt Calibration        & 0.07 & 0.29 \\ \hline
		Isotonic Regression  & 0.16 & 0.41 \\ \hline
	\end{tabular}
	\vspace{-0.5em}
	\caption{Comparison of different calibration techniques.}
	\label{table:cal-comparison}
\end{table}

With AlexNet and the FCVID dataset, we perform an experiment to evaluate the performance of different calibration techniques. 
In the experiment, the video frames are divided into 10 bins with 0.1 confidence interval. 
For example, the first bin includes the frames with confidence score from 0 to 0.1.
For each bin, we process the frames on NPU and calculate the accuracy.
We use the images from ImageNet validation dataset and a subset of video frames from FCVID dataset as the training data for Platt Calibration and Isotonic Regression. 
The result is shown in Table \ref{table:cal-comparison}.
The uncalibrated confidence score has the worst performance among the three methods. 
Compared to the Isotonic Regression, Platt Calibration has lower ECE and MCE, which means its performance is better.
This is because the Isotonic Regression suffers from overfitting, where the model can learn the training data well but is not generalized to new data.
Therefore, Platt Calibration is used for calibrating the confidence score.

\begin{figure}[t]
	\centering
	\subfigure[Accuracy and offload traffic under different calibrated confidence score thresholds.]{\includegraphics[width=0.48\linewidth]{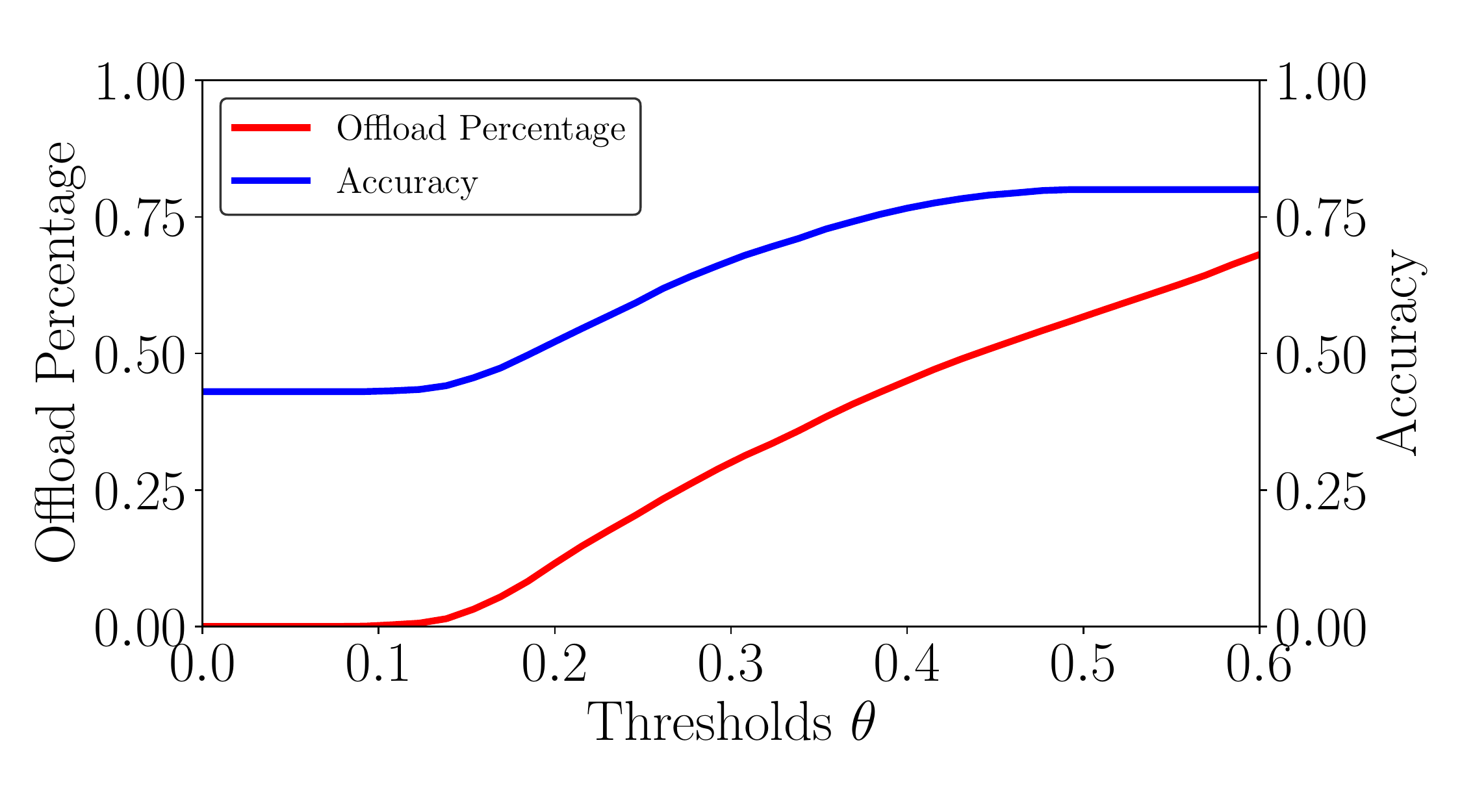}}
	\,
	\subfigure[Accuracy vs. calibrated confidence score.]{\includegraphics[width=0.48\linewidth]{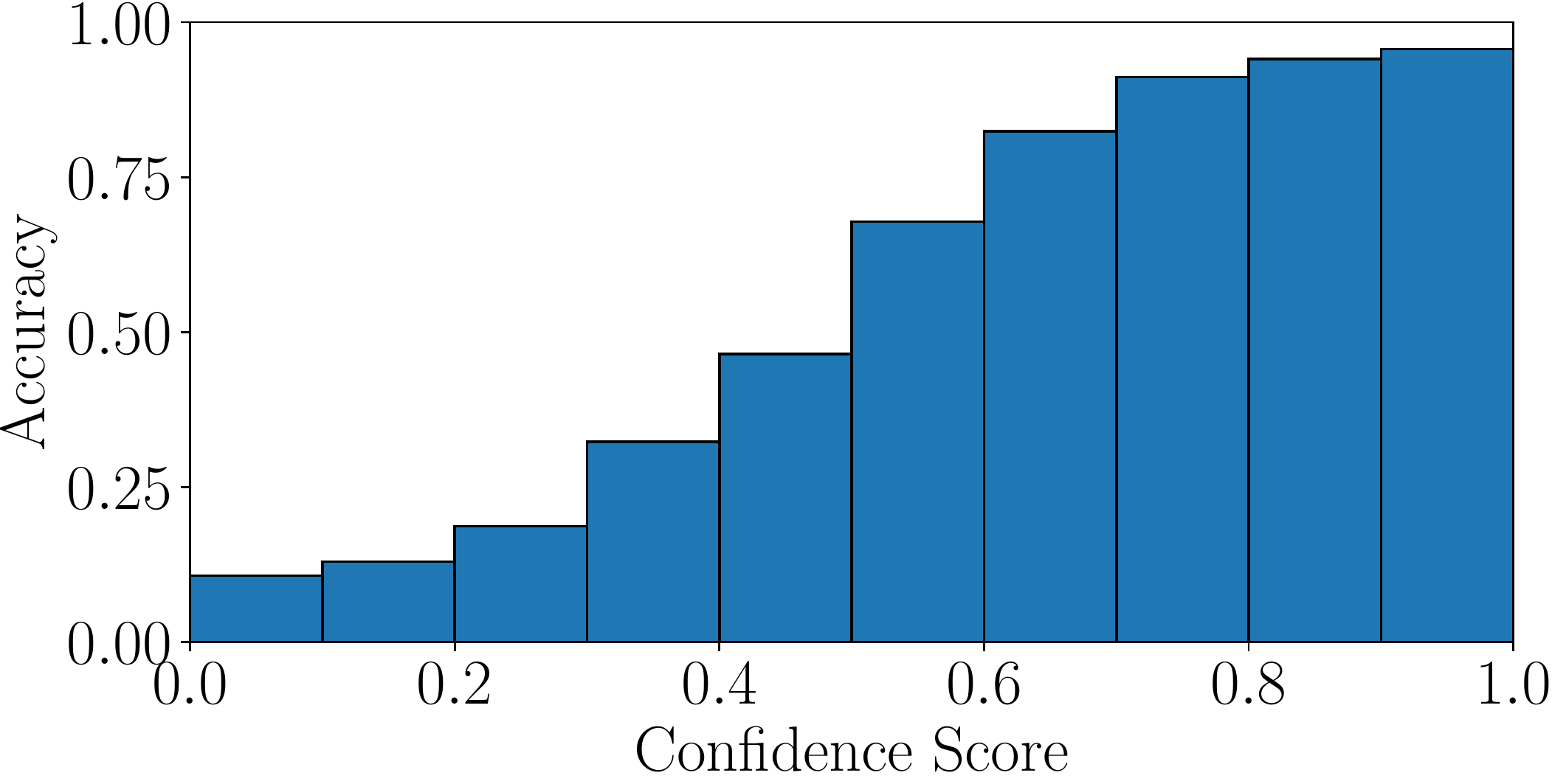}}
	\vspace{-1em}
	\caption{The performance of Platt Calibration.}
	\label{fig:cal-conf-score}
	\vspace{-2em}
\end{figure}

\begin{figure}[t]
	\centering
	\includegraphics[width=0.85\linewidth]{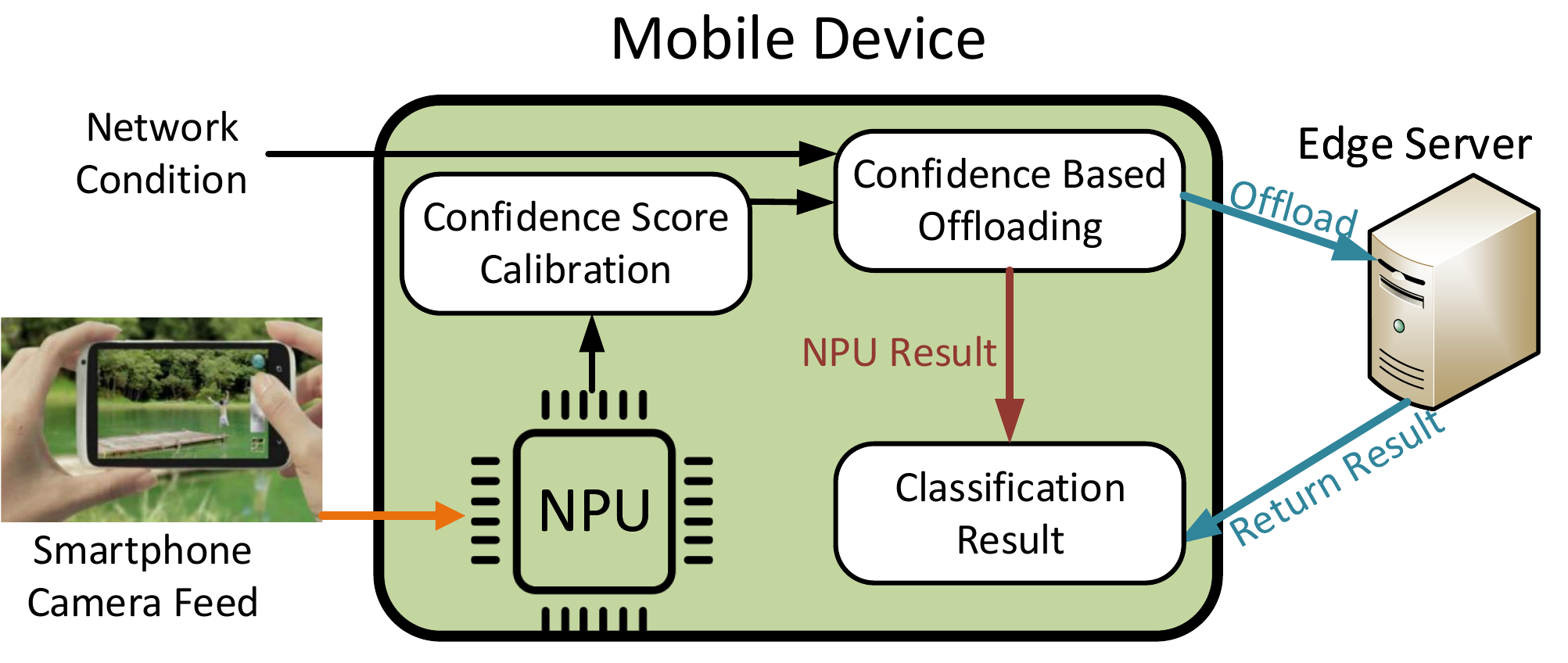}
	\vspace{-0.5em}
	\caption{The CBO framework.}
	\label{fig:overview} 
	\vspace{-1em}
\end{figure}

{\bf The effectiveness of confidence calibration}
We perform experiments to evaluate the performance of the calibration technique using the same setting as that in the last subsection except that the confidence score is replaced with the calibrated confidence score based on Platt Calibration.

As shown in Figure \ref{fig:cal-conf-score} (a),  
the accuracy is increased as the number of offloaded frames increases. 
Compared to the results shown in Figure \ref{fig:org-conf-score}, calibrated confidence score is more effective. 
For example, when the accuracy requirement is 0.7, the threshold has to be above 0.35 at which the offloaded traffic is 30\%.
However, as shown in Figure \ref{fig:org-conf-score}, at least 70\% of the frames are offloaded in order to satisfy the same accuracy requirement.

With the same setting, Figure \ref{fig:cal-conf-score} (b) explains why calibrated confidence score is more effective.
As the confidence score increases from 0 to 1, the accuracy increases from 0.11 to 1.
The accuracy variation is much larger than that shown in Figure \ref{fig:conf-acc-uncal}.
Thus, it is much easier to determine which frames should be offloaded.

\section{The CBO Framework} \label{sec:max-acc}


\subsection{Overview}

Figure \ref{fig:overview} shows our CBO framework for video analytics. 
The video frames are first processed locally on NPU. 
Based on the calibrated confidence score, our framework determines which frames should be offloaded.
To provide real time video analytics, the processing of each video frame should be completed within a time constraint. 
Then, for some offloaded frames, the resolution may be reduced to save bandwidth and delay, at the cost of accuracy.


In the figure, CBO is based on the idea presented in Figure \ref{fig:offload-conf-score}. 
However, the confidence threshold $\theta$ is not fixed; it is adaptively adjusted based on the network condition, the confidence score and the selected frame resolution. 
That is, based on the accuracy and processing time requirement, we study the CBO problem which adaptively selects $\theta$ and the frame resolution, to maximize accuracy under some time constraint.
In the following, we first formulate the problem and then propose an adaptive solution which determines which frames to offload at what resolution based on the confidence score and the network condition.  

\subsection{Problem Formulation}

For each frame $I_i$ ($1 \leq i \leq n$), it is first processed by the DNN on NPU.
Let $p_i$ denote the calibrated confidence score and let $A^{npu}_{p_i}$ denote the accuracy of running the DNN to process the frame on NPU.
Assume the video frame rate is $f$, the time interval between two consecutive video frames is $\gamma = \frac{1}{f}$.
Since NPU is very fast, the local processing time for each frame is shorter than $\gamma$ and it is not the bottleneck.
For the $i^{th}$ video frame, assume its arrival time is $i \gamma$, our system ensures that it is processed before time $T + i \gamma$, where $T$ is the time constraint.

\begin{table}[t]
	\centering
	\begin{tabular}{|c | l |} 
		\hline
		Notation & Description \\ \hline
		$I_i$ & The $i$th frame  \\ \hline
		$A^{o}_r$ & The accuracy of the DNN with input images \\
		& in resolution $r$ \\ \hline
		$S(I_i, r)$ & The data size of the frame $I_i$ in resolution $r$ \\ \hline
		$T^o$ & The processing time on the server \\ \hline
		$B$   & Upload bandwidth (data rate) \\ \hline
		$f$   & Frame rate (fps) \\ \hline
		$T$   & The time constraint for each frame \\ \hline
		$n$	  & The number of video frames that needs to be processed \\ \hline
		$\theta$ & The confidence score threshold for offloading. \\ \hline
	\end{tabular}
	\vspace{-0.5em}
	\caption{Notation}
	\label{table:notations}
	\vspace{-3em}
\end{table}

Based on the calibrated confidence score, a frame may be offloaded to the server for further processing to improve accuracy.
If $p_i$ is higher than the threshold $\theta$, the classification result is returned. Otherwise, 
the classification result is considered to be incorrect and $I_i$ is offloaded to the server in the original resolution or reduced to resolution $r$ before being offloaded.
Let $B$ denote the upload bandwidth and let $L$ denote the network latency between the server and the mobile device.
Then, it takes $\frac{S(I_i, r)}{B} + T^{o} + L$ to transmit the $i^{th}$ frame in resolution $r$ and receive the result from the server.
In this way, the transmission time can be reduced by resizing the frame to a lower resolution, at a cost of lower accuracy.

The notations used in the problem formulation are listed in Table \ref{table:notations}.
The CBO problem can be formulated as an integer programming in the following way.

\begingroup
\allowdisplaybreaks
\setlength\abovedisplayskip{-5pt}
\setlength\belowdisplayskip{0pt}
\small
\begin{align}
	\max \quad & \frac{1}{n} \displaystyle\sum^{n}_{i=1} (A^{npu}_{p_i} (1-X_i) + \sum_r A^o_r Y^r_i X_i) \\
	\textrm{s.t.} \quad 
	& D(k) \leq (i - k) * \gamma + T, \, \forall i, k \\
	& \theta - p_i < X_i, \, \forall i \\
	& p_i - \theta \leq 1 - X_i, \, \forall i \\
	& \sum_r Y^r_i = 1, \forall i \\
	& X_i, Y^r_i \in \{0, 1\}, \, \forall i 
\end{align}
\endgroup

{\noindent}
Where $D(k) = \sum_r \sum_{k\leq i} \frac{S(k, r)Y^{r}_k X_k}{B}  + T^{o} X_i + L$ is the offloading time for the frames that arrive between $I_k$ and $I_i$.
$X_i$ is a variable to show whether the frame is needed to be offloaded and $Y^r_i$ is a variable to show which resolution the frame is resized to before offloading.
If $X_i=0$, the frame $I_i$ is only processed locally.
If $X_i=1$, the frame $I_i$ is offloaded to the server for further processing.
If $Y^r_i = 1$, the frame $I_i$ is resized to resolution $r$ before offloading.

Objective (1) is to maximize the accuracy of the processed frames.
Constraint (2) specifies that the classification results should be returned within the time constraint, and constraint (3) specifies that the frames with confidence score lower than or equal to threshold $\theta$ should be offloaded.
Constraint (4) specifies that frames with confidence score higher than $\theta$ should not be offloaded.

\subsection{Finding the Optimal Solution}

In this subsection, we build a solution graph based on which we can find the optimal solution for the CBO problem.
As shown in Figure \ref{fig:solution-graph}, nodes at different levels represent the schedule options for different frames. 
More specifically, nodes at level $i (1\leq i \leq n)$ represent the schedule options of frame $I_i$. 
There are $(m+1)$ nodes at level $i$, representing $m+1$ scheduling options, where $m$ is the number of frame resolutions. 
For example, at level 1 (i.e., for frame $I_1$), node $V^{npu}_{1}$ represents that the frame is processed on local NPU, 
and $V^{r_m}_{1}$ represents that frame $I_1$ is offloaded to the server in resolution $r_m$. 
Each node is associated with a time window $[i\gamma, i\gamma + T]$ which represents the time constraint for the frame to be processed. 
We also create two dummy nodes: $V_{start}$ and $V_{end}$.
They are the source node and the destination node at level 0 and $n+1$.

\begin{figure}
	\centering
	\includegraphics[width=0.75\linewidth,valign=c]{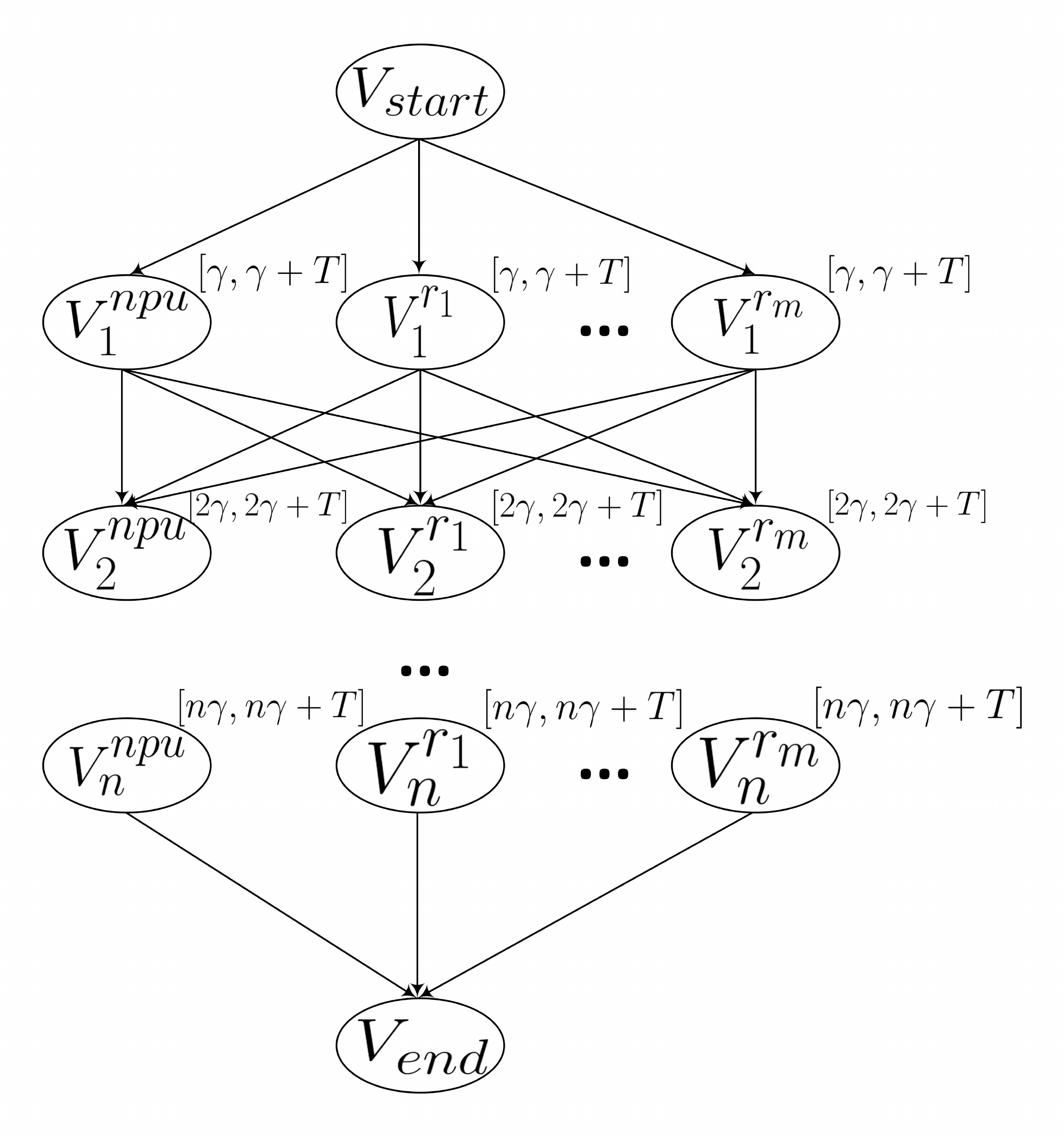}
	\vspace{-1em}
	\caption{The solution graph.}
	\label{fig:solution-graph}
	\vspace{-2em}
\end{figure}

For edges, we add a link from each node at level $i$ to all nodes at level $i+1$.
An edge $(V^r_i, V^r_{i+1})$ has two attributes, cost and time duration.
The cost is defined as the negative of the accuracy $c(V^r_i, V^r_{i+1}) = -A^o_r$ and $c(V^{npu}_i, V^r_{i+1})=A^{npu}_{p_i}$.
The time duration is the offloading time since local processing time on NPU is very short and it is not the bottleneck. 
The time duration between $V^r_i$ and $V^r_{i+1}$ (for frame $I_i$ with resolution $r$) can be computed as $t(V^r_i, V^r_{i+1}) = \frac{S(i, r)}{B}$.
In this way, our problem is converted to the problem of finding the least cost path from $V_{start}$ to $V_{end}$ while visiting each chosen node within its specific time window.

\begin{theorem}
	The CBO problem is NP-hard.
\end{theorem}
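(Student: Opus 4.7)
The plan is to prove NP-hardness of the CBO decision problem by a polynomial-time reduction from 0-1 Knapsack. The CBO problem has a transparent knapsack-like substructure: for each frame one selects one of finitely many mutually exclusive scheduling options (local NPU, or offload at one of several resolutions), each consuming a different amount of the shared bandwidth/time budget and contributing a different amount of accuracy. The reduction I have in mind loads the combinatorial hardness onto the resolution variables $Y^r_i$ rather than the offload indicator $X_i$, since this sidesteps the $\theta$-coupling between $X_i$ and $p_i$.

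Given a 0-1 Knapsack instance with positive integer weights $w_1,\dots,w_n$, values $v_1,\dots,v_n$, capacity $W$, and target $V$, I will construct a CBO instance with $n$ frames $I_1,\dots,I_n$ and $n+1$ resolutions $r_0, r_1, \dots, r_n$. I set $p_i = 0$ and $A^{npu}_{p_i} = 0$ for every $i$, and pick $\theta = 1/2$, so that constraints (3)-(4) force $X_i = 1$ for all $i$. I introduce a ``null'' resolution $r_0$ with $A^o_{r_0} = 0$ and $S(I_i, r_0) = 0$ for every $i$; for each $i$, I set $A^o_{r_i} = v_i$, $S(I_i, r_i) = w_i$, and $S(I_i, r_j) = M$ for every $j \notin \{0, i\}$, with $M \gg W$. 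Finally I set $B = 1$, $T^o = 0$, $L = 0$, $\gamma = 0$, and $T = W$. With these choices, constraint (2) collapses to the single aggregate bound $\sum_i \sum_r S(I_i, r) Y^r_i \le W$; the huge size $M$ makes any cross assignment infeasible; and each frame effectively chooses between $r_0$ (accuracy $0$, size $0$) and $r_i$ (accuracy $v_i$, size $w_i$). Hence the CBO objective reduces to $\frac{1}{n} \sum_i v_i Y^{r_i}_i$ and the feasible region becomes $\{Y^{r_i}_i \in \{0,1\} : \sum_i w_i Y^{r_i}_i \le W\}$, which is exactly 0-1 Knapsack. A CBO schedule with accuracy at least $V/n$ therefore exists iff the original Knapsack has a subset of total value at least $V$ and total weight at most $W$, and the construction is clearly polynomial.

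The main obstacle I anticipate is the $\theta$-coupling in constraints (3)-(4), which at first glance restricts the achievable offload subsets to $\{i : p_i \le \theta\}$ and so rules out arbitrary subset selection via $X_i$ directly. I sidestep this by taking uniform $p_i = 0$ and $\theta = 1/2$, which pins every $X_i$ to $1$ and loads all combinatorial choice onto the $Y^r_i$. A secondary technical check is that the intermediate window bounds $D(k) \le (i-k)\gamma + T$ for $1 < k < n$ are satisfied whenever the aggregate bound is; this is immediate with $\gamma = 0$, since each subwindow offload time is dominated by the full sum $\sum_i \sum_r S(I_i, r) Y^r_i \le W$. Once these details are in place, the reduction yields the claim.
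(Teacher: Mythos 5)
Your reduction is correct in substance, but it takes a genuinely different route from the paper's. The paper first recasts CBO as a least-cost path problem with time windows on a layered graph and then reduces \emph{subset sum} to path feasibility: each number $a_i$ becomes the time duration of the ``offload'' edge at level $i$, and the destination node is forced to be reached at the exact time $K$ via a degenerate window $[K,K]$. The combinatorial hardness there lives entirely in the offload/no-offload choice (two options per frame), with resolutions playing no role. You instead work directly with the integer program: you pin every $X_i=1$ through the $\theta$-constraints and encode \emph{knapsack} in the resolution variables $Y^r_i$ via a null resolution plus a designated resolution per frame. What your version buys is that it stays at the level of constraints (1)--(6) and avoids the paper's somewhat artificial point window $[K,K]$, which does not literally arise from the CBO constraints (every window there has width $T$). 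What it costs is that your construction needs the number of resolutions $m$ to grow linearly with $n$ and needs $S(I_i,r)$ and $A^o_r$ to be set adversarially per frame--resolution pair (non-monotone in resolution); so if $m$ is viewed as a small fixed constant --- as in the paper's experiments, where $m=5$ --- your argument does not establish hardness, whereas the paper's does, since it uses only the binary per-frame choice. Two trivial repairs you should state explicitly: rescale the values $v_i$ by $\sum_j v_j$ so that the quantities you assign to $A^o_{r_i}$ are legitimate accuracies in $[0,1]$, and replace $\gamma=0$ (infinite frame rate) by a sufficiently small $\gamma>0$, which only loosens the right-hand sides of constraint (2) and so preserves the equivalence.
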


\begin{proof}
	We reduce a well known NP-hard problem, the subset sum problem to our problem.
	In the subset sum problem, there is a set $U$ which includes $n$ numbers $(a_1, a_2, a_3, \ldots, a_n)$ and the goal is to find a subset of numbers so that its sum is equal to a value $K$.
	
	For an arbitrary instance of the subset sum problem, we can construct an instance of our solution graph as follows.
	For each number $a_i$, two nodes $V^0_i$ and $V^1_i$ are added to the graph at level $i$.
	The time window of these two nodes are set to be $[-\sum_i |a_i|, \sum_i |a_i|]$.
	Specially, the time window for the destination node $V_{end}$ is set to be $[K, K]$.
	
	For edges, we add links from nodes at level $i$ to nodes at level $i+1$.
	More specifically, the cost is set to be 0 and the time duration $t(V^p_i, V^q_{i+1})$ is set to $a_i$ if $p=1$; otherwise, $t(V^p_i, V^q_{i+1})=0$.
	
	
	A solution to our problem must satisfy the requirement that the destination node $V_{end}$ must be visited at the exact time $K$.
	As a result, the sum of the time duration of the selected nodes is equal to $K$.
	Since the time duration of a node is also equivalent to the value of the corresponding number, the sum of the selected numbers is also equal to $K$.
	This completes the reduction and hence the proof.
\end{proof}

Due to the time window constraint, the shortest path cannot be found using the Dijkstra algorithm.
Instead, a dynamic programming algorithm is used to search for the optimal path.

Let $P_j(V^r_i)$ denote the $j^{th}$ feasible path from $V_{start}$ to $V^r_i$ and each path has two attributes $T_j(V^r_i)$ and $C_j(V^r_i)$, which are used to record the time duration and the cost of the path.
Initially, $T(V_{start})$ and $C_(V_{end})$ are set to be 0.
For each path $P_j(V^r_i) = (V_{start}, V^r_1, \ldots, V^r_i)$, its attributes are iteratively computed as follows

\begingroup
\allowdisplaybreaks
\setlength\abovedisplayskip{-13pt}
\setlength\belowdisplayskip{0pt}
\begin{align*}
	& T_j(V^r_k) = \max(T_j(V^r_{k-1}) + t(V^r_{k-1}, V^r_k), k\gamma) \\
	& C_j(V^r_k) = C_j(V^r_{k-1}) + c(V^r_{k-1}, V^r_k) 
\end{align*}
\endgroup

Since the node $V_i$ must be visited during $[i\gamma, i\gamma + T]$, a feasible path should satisfy $i \gamma \leq T_j(V^r_k) \leq i\gamma + T$.
Although a lot of feasible paths can be found in the iterations, the algorithm only considers the most efficient ones.
More specifically, for two paths $P_1(V^r_i)$ and $P_2(V^r_i)$, if $T_1(V^r_i) < T_2(V^r_i)$ and $C_1(V^r_i) < C_2(V^r_i)$,  $P_1(V^r_i)$ is more efficient than $P_2(V^r_i)$ and $P_2(V^r_i)$ will not be considered in future iterations.
The optimal path is $P(V_{end})$ which has the minimum cost $\min C(V_{end})$.

Since there are at most $T$ different efficient paths from $V_{start}$ to $V^r_i$, there are at most $mT$ paths at level $i (1 \leq i \leq n)$.
Therefore, the time complexity of the optimal algorithm is $O(nm^2T)$.

\subsection{The CBO Algorithm}


The optimal solution can maximize the accuracy within the time constraint.
However, it is not practical since it requires the complete knowledge of all frames, such as the frame sizes and the confidence score of running DNN to process the frame.
In this subsection, we remove this assumption and propose an adaptive solution, called CBO algorithm. 

Since the frames with lower confidence scores are classified with lower accuracy on NPU, 
they should be offloaded to increase the accuracy as long as there is available bandwidth.
However, due to bandwidth limitation, some frames cannot be offloaded and have to rely on local NPU for classification. 

We use the following dynamic programming algorithm to determine which frames should be offloaded with what resolution. 
Suppose $k$ frames have been processed locally.
For each frame $I_i$ ($1 \leq i \leq k$), its arrival time is $t^{arr}_i$ and the confidence score is $p_i$.
The frames are sorted in the descending order of the confidence scores, which means $p_i > p_j$ if $i < j$.
In our algorithm, a list $l_j$ ($j \in [0, k]$) is used to find the schedule decision for maximizing the accuracy.
Each element in the list $l_j$ is a pair $(t, A)$, where $A$ is the accuracy improvement which can be achieved by offloading the first $j$ frames within time $t$.
If  $I_i$ is offloaded in resolution $r$, the accuracy improvement can be computed as $A=A^o_r - A^{npu}_{p_i}$.
Initially, $l_0 = {(0, 0)}$.
To add pairs to the list $l_j$, we consider the following two cases.

{\em No offloading: } In this case, the $j^{th}$ frame will not be offloaded to the server for further processing.
All pairs in $l_{j-1}$ will be added to $l_j$.

{\em Offloading: } In this case, the $j^{th}$ frame will be offloaded in resolution $r$.
It takes $\frac{S(I_j, r)}{B}$ to transmit the frame and the accuracy improvement is $A^o_r - A^{npu}_{p_j}$.
For each pair $(t, A) \in l_{j-1}$, a new pair $(max(t, t^{arr}_j) + \frac{S(I_j, r)}{B}, A + A^o_r - A^{npu}_{p_j})$ is added to $l_j$.
Notice that the frames should be processed within the time constraint.
Therefore, $max(t, t^{arr}_j) + \frac{S(I_j, r)}{B} + T^o + L \leq T + t^{arr}_j$ must be satisfied for all new pairs.

To improve the efficiency of our algorithm, only the most efficient pairs in $l_j$ are kept.
More specifically, a pair $(t', A')$ is said to dominate another pair $(t, A)$ if and only if $t' \leq t$ and $A' \geq A$.
The pair $(t', A')$ is more efficient than the pair $(t, A)$ and all the dominated pair will be removed from $l_j$.

With the list of $l_k$, we can find the confidence threshold $\theta$ and offloaded frame resolution $r^o$.
After the first frame is offloaded, the algorithm will be run again for the frames that have been processed locally.
The CBO algorithm is summarized in Algorithm \ref{alg:cbo}.
In Lines 1-10, dynamic algorithm is applied to maximize the accuracy, and the schedule decision is determines for frame $I_1$ in Lines 11-20.
The running time of the algorithm is $O(k^2 * m)$.

\setlength{\textfloatsep}{-3pt}
\begin{algorithm}[t]
	\SetAlgoLined
	\KwResult{Confidence threshold $\theta$, frame resolution $r^o$}
	\SetKw{Break}{break}
	\DontPrintSemicolon
	
	$l_0 \leftarrow \{(0, 0)\}$ \\
	
	\For{$j \leftarrow 1$ to $k$}{
		\For{each $(t, A) \in l_{j-1}$}{
			Add $(t, A)$ to $l_j$ \\
			\For{each possible resolution r}{
				$t' \leftarrow max(t, t^{arr}_j) + \frac{S(j, r)}{B}$ \\
				\If{$t' + T^o + L \leq T + t^{arr}_j$}{
					$A' \leftarrow A^o_r - A^{npu}_{p_j}$ \\
					Add $(t', A + A')$ to $l_j$ \\
				}
			}
		}
		Remove the dominated pairs from $l_j$ \\
	}
	
	$(t', A') \leftarrow \arg \max_{(t, A) \in l_k} A$ \\
	
	\For{$j$ from $k - 1$ to $0$}{
		\For{each pair $(t, A)$ in $l_j$} {
			\For{each possible resolution $r$}{
				\If{$t + \frac{S(I_j, r)}{B} = t'$ and $A + A^o_r - A^{npu}_{p_j} = A'$}{
					$A' \leftarrow A$, $t' \leftarrow t$ \\
					$\theta \leftarrow p_j$, $r^o \leftarrow r$ \\
				}
			}
		}
	}
	
	\Return $r^o$, $\theta$ \\
	\caption{CBO algorithm}
	\label{alg:cbo}
\end{algorithm}


\section{Performance Evaluations} \label{sec:evaluation}

In this section, we evaluate the performance of the CBO algorithm and compare it with other approaches.

\subsection{Evaluation Setup}

The evaluations are performed on HUAWEI mate 10 pro, which is equipped with 6 GB memory, octa-core CPU ($4 \times 2.4$ GHz and $4 \times 1.8$ GHz) and a NPU.
HUAWEI has published the HUAWEI DDK \cite{hiai} toolset for developers to run the DNN on NPU.
The pre-trained DNNs must be optimized before they can be run on NPU since NPU has a different architecture from CPU.
The HUAWEI DDK includes toolsets to perform such optimizations for DNNs.
It also includes the APIs to run the DNNs, and a few Java Native Interface (JNI) functions are provided to use the APIs on Android.
Since these JNI functions cannot extract the confidence scores,
we add JNI functions to use the confidence score calibration model which is trained on a powerful desktop.

In the experiment, AlexNet is deployed on the mobile device and ResNet-152 is deployed on the server.
These DNNs are used for object recognition,
and they are very popular in computer vision community and have been fine-tuned for many problem.
Moreover, AlexNet has a simple model structure and it can be executed efficiently on NPU to provide real-time video analytic.
In contrast, ResNet-152 is more complex and it can achieve higher accuracy than AlexNet at the cost of more computational power.

To measure the performance of our algorithm, we use a subset of videos from the FCVID dataset, which includes many real-world videos.
These videos have been used for training models related to object classification and activity recognition.
In our experiment, we focus on object classification, and thus activity recognition clips are not used.
Since the dataset is very large, about 1.9 TB, we randomly select 40 videos from the dataset and filter out the noisy data. 

In the experiment, the frames are offloaded in the lossless PNG format.
The server is a desktop with AMD Ryzen 7 1700 CPU, GeForce GTX1070 Ti graphics card and 16 GB RAM.
We have installed the Caffe \cite{jia-mm14} framework to run the DNNs on GPU.

\begin{figure}[t]
	\centering
	\includegraphics[width=0.8\linewidth,valign=c]{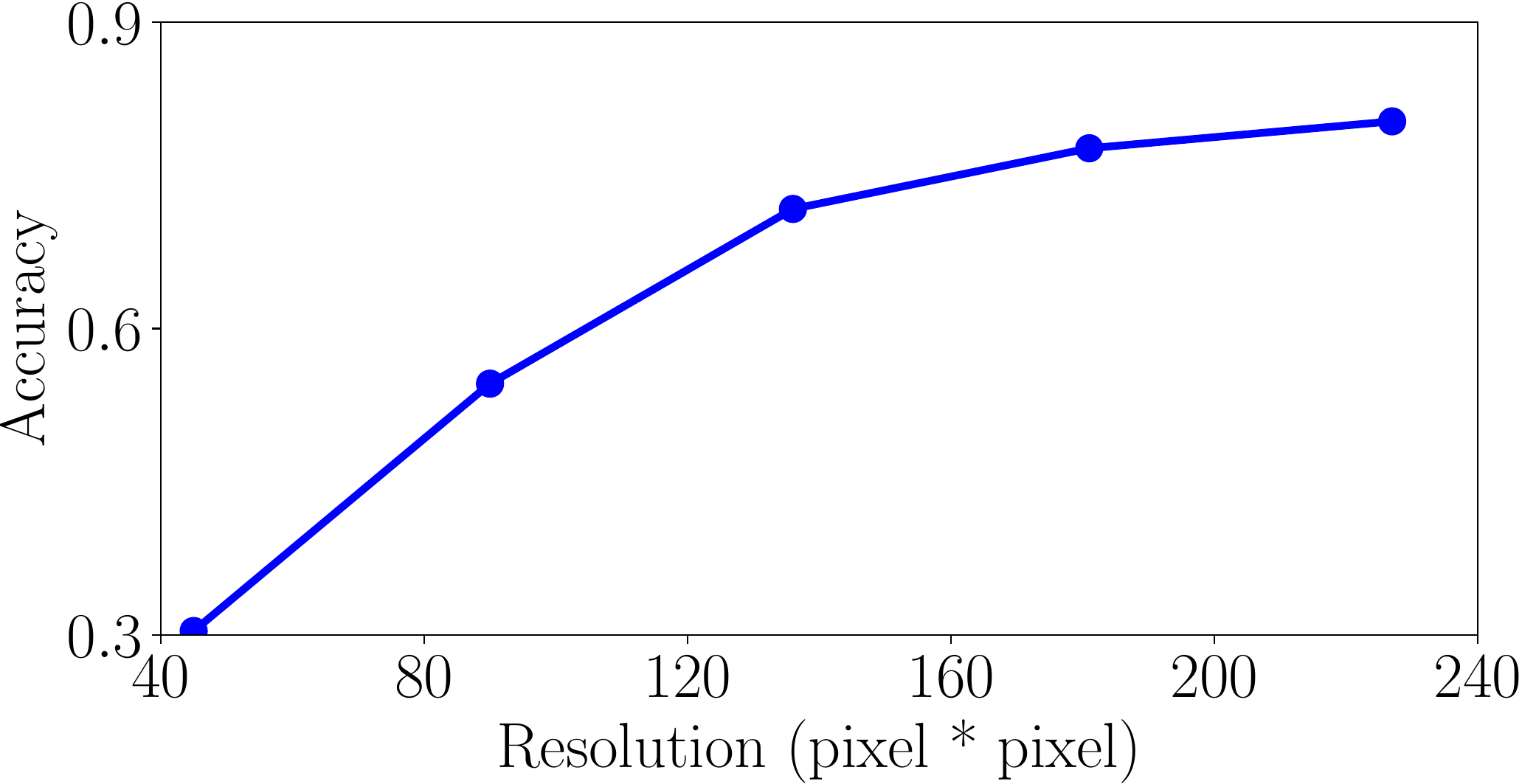}
	\vspace{-0.5em}
	\caption{\small Accuracy vs. Resolution.}
	\label{fig:exp-resolution}
\end{figure}

\begin{figure*}[h]
	\vspace{-1em}
	\centering
	\begin{minipage}[t]{0.32\textwidth}
		\centering
		\includegraphics[width=\linewidth]{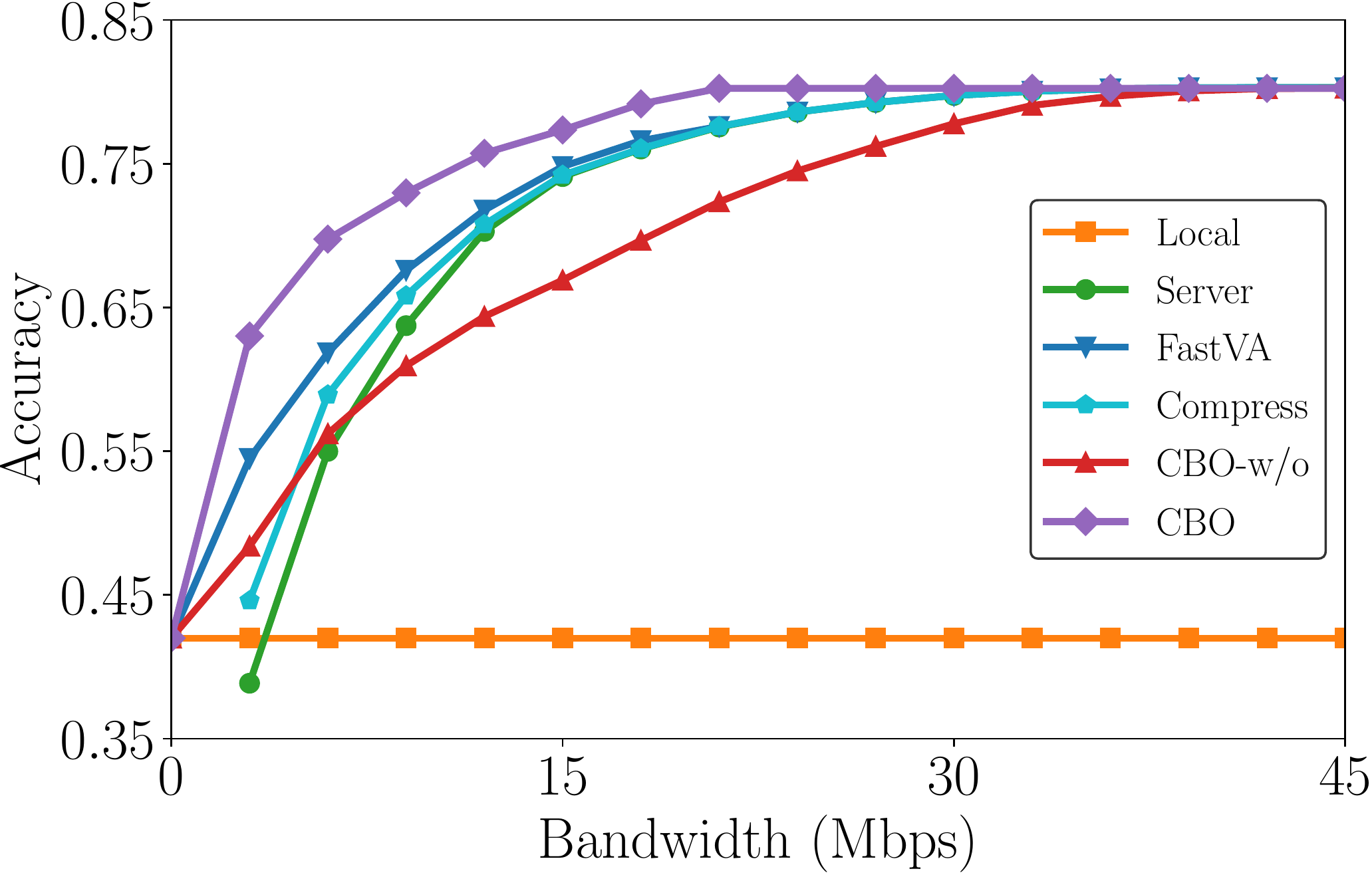}
		\vspace{-1.5em}
		\caption{The performance of different approaches under different network conditions.} \label{fig:exp-bandwidth}
	\end{minipage}
	\,
	\begin{minipage}[t]{0.32\textwidth}
		\centering
		\includegraphics[width=\linewidth]{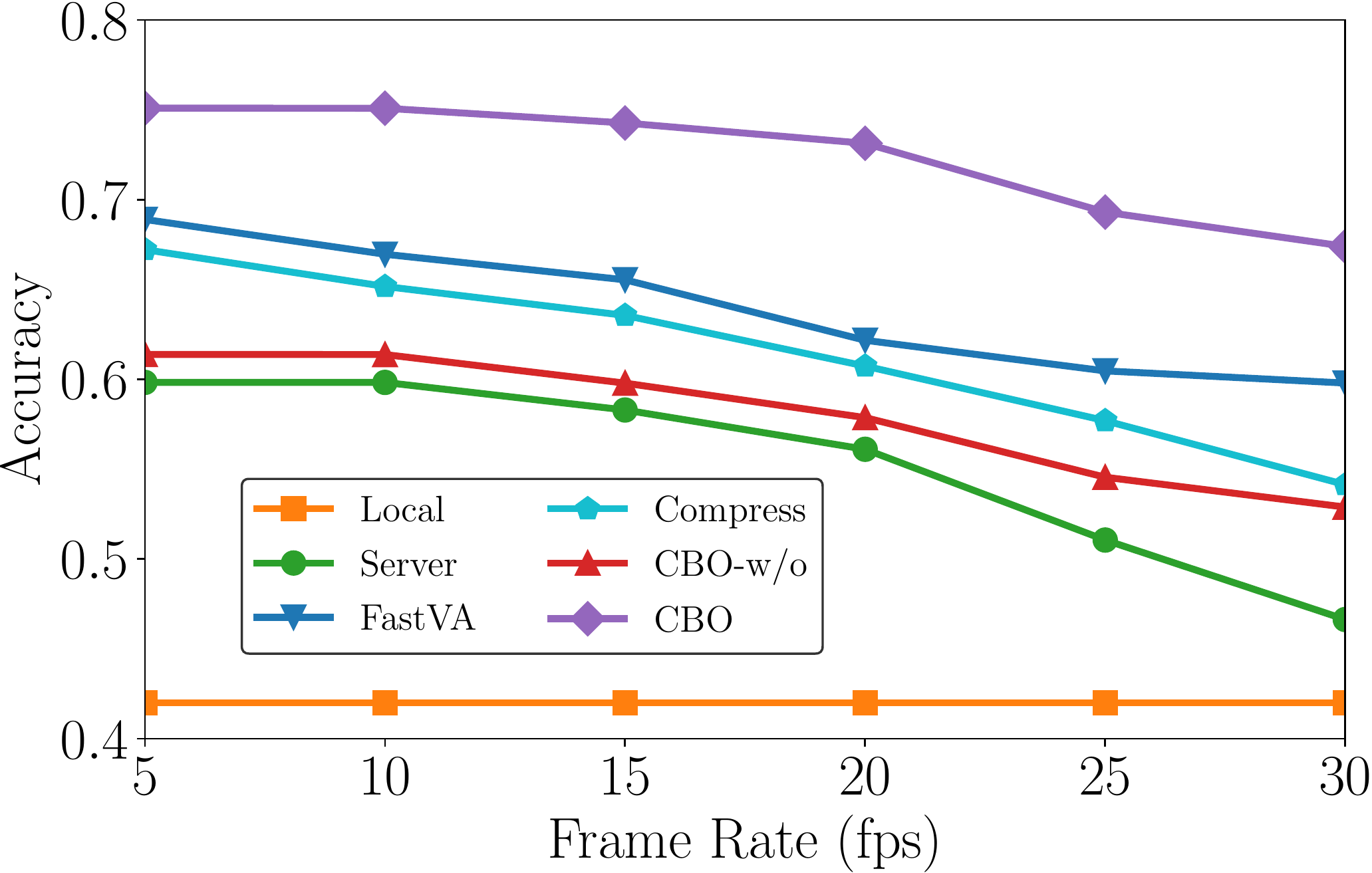}
		\vspace{-1.5em}
		\caption{The performance of different approaches under different frame rates.}\label{fig:exp-fps}
	\end{minipage}
	\,
	\begin{minipage}[t]{0.32\textwidth}
		\centering
		\includegraphics[width=\linewidth]{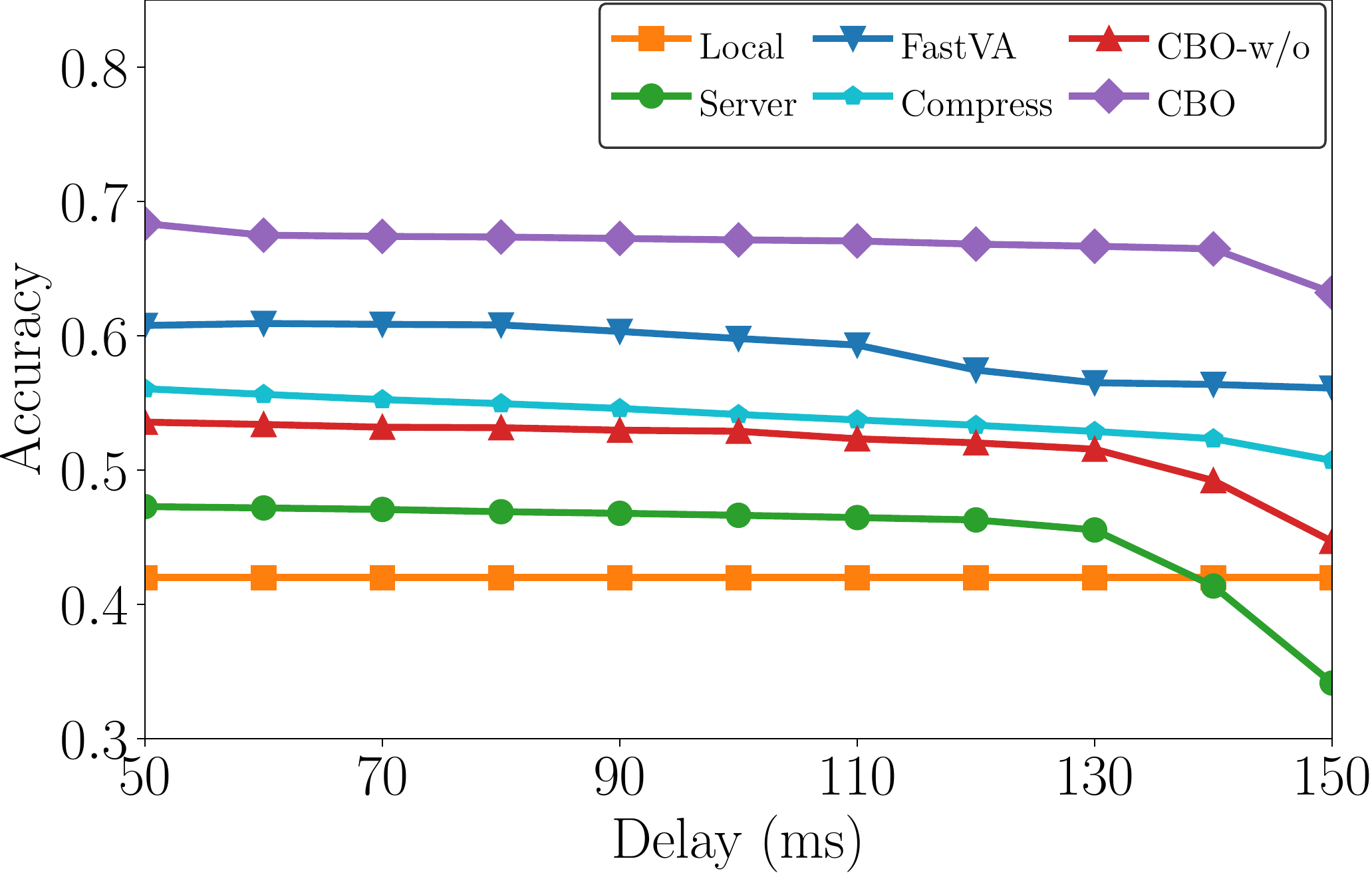}
		\vspace{-1.5em}
		\caption{The performance of different approaches under different network latency.}\label{fig:exp-delay}
	\end{minipage}
	\vspace{-1.5em}
\end{figure*}

We evaluate the proposed algorithms with different frame rates.
Most videos in the dataset use 30 fps, and thus we have to change their frame rate by decoding/encoding.
For ResNet-152, the maximum resolution of the input image is 224x224 pixels.
This resolution can be downsized for some offloading images, and we consider 5 different resolutions: 45x45, 90x90, 134x134,  179x179 and 224x224 pixels.
The tradeoff between accuracy and frame resolution is shown in Figure \ref{fig:exp-resolution}.

\begin{table}[h]
	\centering
	\begin{tabular}{|c|c|}
		\hline
		& Running Time (ms) \\ \hline
		AlexNet on NPU               & 20                \\ \hline
		ResNet-152 on server         & 37                \\ \hline
		Confidence Score Calibration & 8                 \\ \hline
	\end{tabular}
	\vspace{-0.5em}
	\caption{The running time of different models for each frame.}
	\label{table:time-measurement}
\end{table}
\vspace{-1em}

The time constraint for each frame is set to be 200 ms in all experiments.
The running time of our CBO algorithm is less than 1 ms on smartphone and it is negligible compared to the time constraint (100 ms level).
The running time of different DNNs are shown in Table \ref{table:time-measurement}.
The data shown in Figure \ref{fig:exp-resolution} and Table \ref{table:time-measurement} will be used for making scheduling decisions in Algorithm \ref{alg:cbo}.
We compare CBO with the following approaches.

\begin{itemize}
	\item \textbf{Local: } 
	All frames are processed locally on NPU.
	After a frame has been processed on NPU, the classification result is returned. 
	
	\item \textbf{Server: } 
	All frames are offloaded to the server for processing.
	Each frame is resized to a resolution due to bandwidth limitation so that it can be offloaded before the next frame arrives. 
	
	\item \textbf{FastVA: }
	This is an implementation of the FastVA framework \cite{tan-infocom2020} which maximizes accuracy under some time constraint.
	Based on the optimization, it determines which frames should be processed locally on NPU and which frames should be offloaded to the edge server.
	
	\item \textbf{Compress: }
	This method is similar to FastVA, except that it runs a compressed DNN to process the frames locally on CPU.
	We leverage Tensorflow which is a well-known deep learning framework to perform pruning and dynamic range quantization on the original DNN. 
	Specifically, the sparsity of the model is set to 0.9 in the pruning step.
	After compression, the model size is reduced by 95\% and 50\% of the processing time is saved on CPU, but still much slower than NPU.
	
	
	\item \textbf{CBO without Calibration (CBO-w/o): } 
	This algorithm is the same as CBO, except that the confidence score is not calibrated.
	
	\item \textbf{Optimal: }
	This shows the performance upper bound for all algorithms.
	It tries all possible combinations and chooses the schedule that maximizes the accuracy.
	Note that this method cannot be used for processing videos in real time since it takes too much time to search all possible schedules. 	We can only find the optimal solution offline by replaying the data trace.
\end{itemize}

\subsection{Evaluation Results}

The performance of the algorithm depends on several factors, the bandwidth, the network latency, and the video frame rate.

In Figure \ref{fig:exp-bandwidth}, we compare CBO with Local, Server, FastVA, Compress and CBO-w/o under different network conditions.
In the evaluation, the network latency is set to be 100 ms and the frame rate is set to be 30 fps.
Since no frame is offloaded to the server in the Local approach, its performance does not change under different network conditions.
Local, FastVA, CBO and CBO-w/o have the same accuracy when the bandwidth is 0, since no frame can be offloaded to the server.
When the bandwidth is lower than 3 Mbps, the Local approach achieves higher accuracy than the Server approach.
This is because the frames have to be offloaded in an extremely low resolution.
Even with advanced DNN running at the server, the accuracy is still pretty low with low resolution frames.
Compared to the Compress approach, FastVA achieves higher accuracy when the network bandwidth is lower than 15 Mbps.
This is because the running time of compressed DNN on CPU is long and Compress offloads most frames in low resolution.
Note that the processing time of compressed DNN (50\%) is not reduced as much as the model size (95\%).
This is because the reduction of the model size is mainly due to the removal of the redundant parameters in the fully connected layers. After compression, there are still many convolutional layers which are computationally intensive, and it takes a large amount of time to run these layers.
In contrast, FastVA avoids offloading frames in low resolution by processing video frames on NPU.
As can be seen from the figure, CBO outperforms FastVA, since CBO knows which frames are classified incorrectly based on the calibrated confidence scores and it can improve the accuracy effectively by offloading the frames with low confidence scores.

From Figure \ref{fig:exp-bandwidth}, we can also see that CBO-w/o underperforms CBO and FastVA because the uncalibrated confidence scores cannot accurately estimate the correctness of the classification result. 
With uncalibrated confidence score, CBO-w/o may offload frames which have been classified correctly on NPU, wasting bandwidth resources, and it may return misclassified results and reduce the accuracy. 
Compared to the Server approach, CBO-w/o achieves higher accuracy when the network bandwidth is low.
This is because CBO-w/o avoids offloading frames in very low resolution by returning the classification result with high confidence score.
When the bandwidth is high, the Server approach outperforms CBO-w/o since it can offload frames with higher resolution and then increase the accuracy.
As the bandwidth increases above 36Mbps, the difference among CBO, CBO-w/o, FastVA, Compress and Server becomes smaller since most frames can be offloaded to the server in higher resolution to achieve higher accuracy.

In Figure \ref{fig:exp-fps}, we evaluate the impact of frame rate on accuracy for different approaches.
We set the uplink network bandwidth to be 5 Mbps and set the network latency to be 100 ms.
As shown in the figure, CBO significantly outperforms CBO-w/o, Server, FastVA, Compress and Local. 
In general, the accuracy of all approaches drops when the frame rate increases.
The Server approach suffers a 15\% accuracy drop when the frame rate increases from 5 fps to 30 fps.
This is because most frames have to be resized to low resolution when the frame rate increases to 30 fps. 
In contrast, only 6\% accuracy drop is observed in CBO, since it only offloads the frames with low confidence scores. With the same amount of bandwidth, CBO can offload more frames with high resolution compared to other approaches.

In Figure \ref{fig:exp-delay}, we evaluate the impact of network latency on accuracy.
We set the uplink network bandwidth to be 5 Mbps and set the frame rate to be 30 fps.
As shown in the figure, CBO significantly outperforms CBO-w/o, Server, FastVA, Compress and Local.
Since the Local approach does not offload any frames, its performance remains the same.
As the latency increases, less frames can be offloaded to the server for processing due to the delay constraint requirement.
Therefore, the performance of Server, Compress, FastVA, CBO and CBO-w/o degrades as the latency increases.
Compared to the Server approach, the accuracy drop in CBO is much smaller.
This is because CBO can reduce the chance of offloading frames in low resolution by using NPU based classification results that have high level of confidence.

\begin{figure}[t]
	\centering
	\footnotesize 
	\subfigure[The performance of Optimal]{\includegraphics[width=0.48\linewidth]{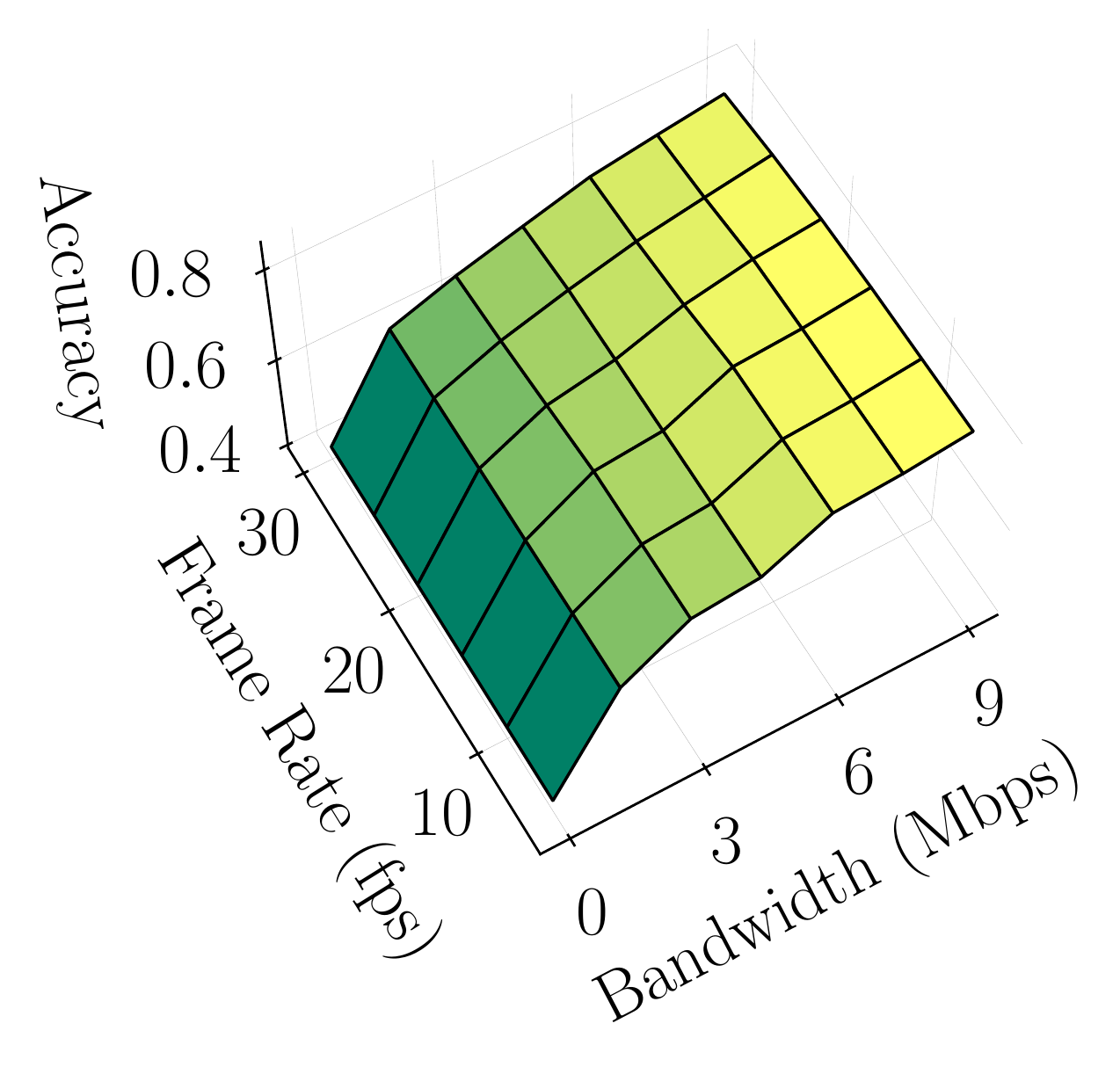}}
	\,
	\subfigure[Difference between Optimal and CBO]{\includegraphics[width=0.48\linewidth]{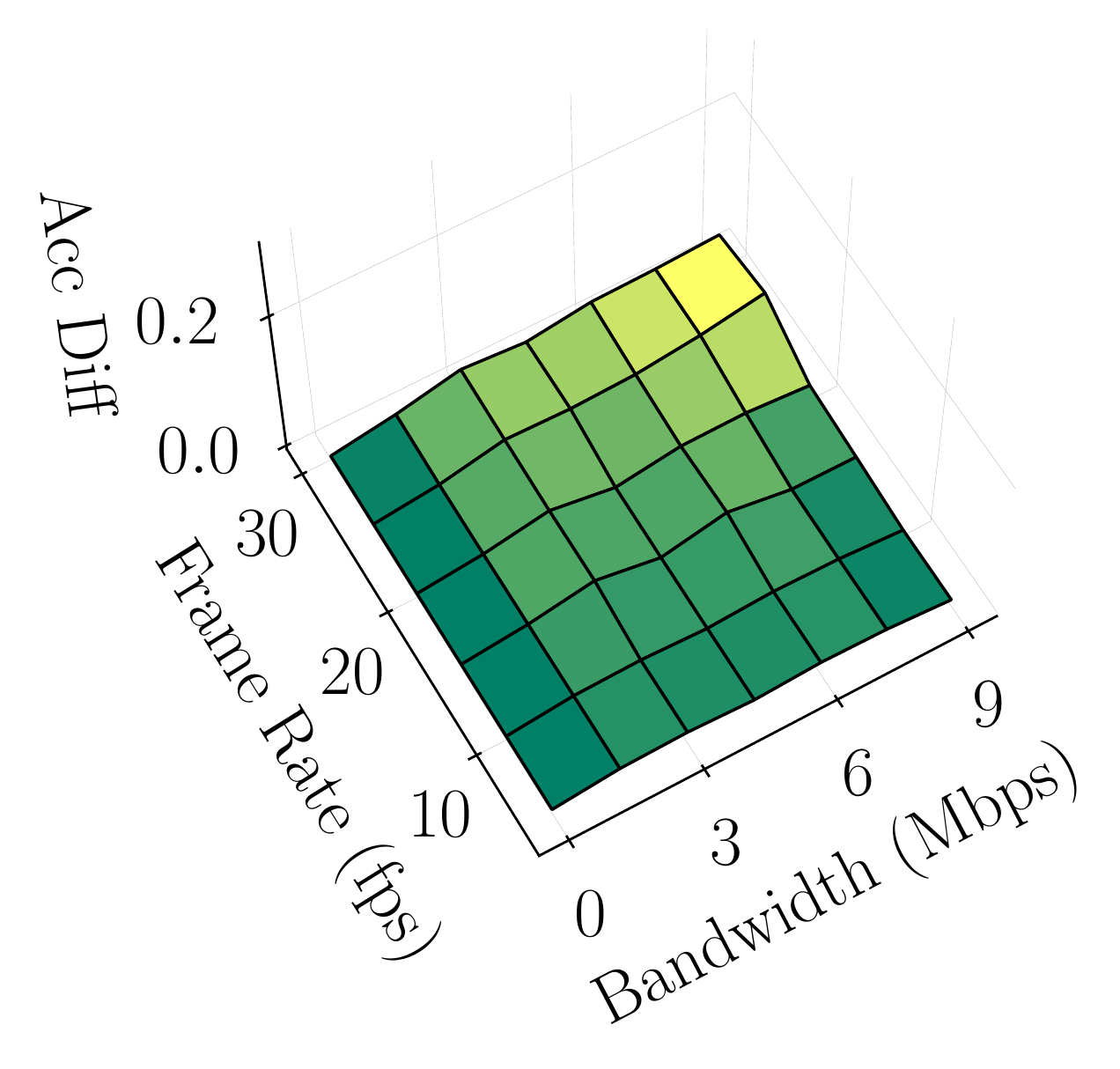}}
	\vspace{-0.9em}
	\caption{Comparison between CBO and Optimal.}
	\label{fig:3d-optimal}
\end{figure}

In Figure \ref{fig:3d-optimal}, we compare CBO with Optimal under various frame rate and network conditions.
As shown in Figure \ref{fig:3d-optimal}(a), the accuracy of Optimal increases when the network bandwidth increases, because the mobile device can upload more frames with higher resolution. 
As frame rate requirement increases, less frames can be offloaded and Optimal has to offload frames with lower resolution.
Therefore, the accuracy becomes lower.

In Figure \ref{fig:3d-optimal}(b), we show the accuracy difference between Optimal and CBO.
The accuracy difference is computed using the accuracy of the Optimal approach minus that of CBO. 
As can be seen from the figure, the difference is almost zero in most cases, which indicates that CBO is close to Optimal.

\section{Related Work} \label{sec:related-work}

In recent years, researchers have made tremendous progress in applying DNNs for various classification problems \cite{parkhi-bmvc15, lapuschkin-cvpr16, he-cvpr16, barbu-nips19}.
However, these DNNs are designed for machines with powerful CPU and GPU, and it is difficult to run them on mobile devices due to the resource limitations.
To address this issue, researchers leveraged model compression technique to reduce the resource demand of DNNs.
For example, in \cite{luo-iccv17, yu-cvpr18}, the authors proposed techniques to remove the redundant parameters and operations from the neural network to reduce the model size and processing time.
Although the efficiency can be improved through these model compression techniques, the accuracy also drops.

Computation offloading represents another kind of solution for enabling video analytics on mobile devices.
Some general offloading frameworks \cite{cuervo-mobisys10, Geng-ICNP2015, Geng-infocom18} have been proposed to optimize energy and reduce the computation time for mobile applications.
However, these frameworks have limitations when applied to deep learning based video analytics since a large amount of data has to be offloaded to the server.
To address this issue, researchers propose offloading framework for deep learning applications \cite{han-mobisys16, tan-tmc21, chen-sensys18,  ran-infocom18, tan-secon21, liu-mobicom2019}.
There have been some studies on confidence based offloading.
For example, in \cite{teerapittayanon-icdcs17} and \cite{wang-infocom2020}, 
confidence score is leveraged to reduce the processing delay by early exit; i.e., returning the classification results without running all DNN layers.
Different from them, our framework is designed for video analytics on mobile device with NPU, where the goal is to maximize accuracy under some time constraint. Moreover, we propose confidence score calibration technique to improve the performance.

Considerable amount of work has been done on improving the execution efficiency of DNNs on mobile devices through hardware support. 
For example, Tan {\em et al.} \cite{tan-ipsn21} developed model partitioning techniques to schedule some neural network layers on CPU while executing other layers on NPU to achieve better tradeoffs between processing time and accuracy.
Cappuccino \cite{motamedi-esl19} optimized computation by exploiting imprecise computation on the mobile system-on-chip (SoC).
Oskouei \textit{et al.} \cite{oskouei-mm16} developed an Android library called CNNdroid for running DNNs on mobile GPU.
DeepMon \cite{huynh-mobisys17} leveraged GPU for continuous vision analysis on mobile devices.
FastVA \cite{tan-infocom2020} leveraged NPU and offloading technique for video analytics on mobile devices.
Different from FastVA, we explore insights about running DNNs on NPU to improve performance.

\section{Conclusions} \label{sec:conclusion}

In this paper, we proposed a CBO framework for video analytics to address the low accuracy problem of running DNNs on NPU. 
The major challenge is to determine when to return the NPU classification result based on the confidence level of running the DNN, and when to offload the video frames to the server for further processing to increase the accuracy. We found that existing confidence scores were not effective for making offloading decisions, and thus proposed techniques to calibrate the confidence score so that it could accurately reflect the correctness of the classification results on NPU. 
We formulated the CBO problem, where the goal is to 
maximize accuracy under some time constraint. To achieve this goal, the confidence score threshold is adaptively adjusted based on the network condition, the confidence score and the selected frame resolution, before being used for determining if the video frame should be offloaded for further processing.
Extensive evaluation results show that the proposed solution can significantly outperform other approaches.

\section*{Acknowledgement}
This work was supported in part by the National Science Foundation under grant number 2125208.

\bibliographystyle{IEEEtran}
\bibliography{sample-base}

\end{document}